\newif\ifshort
\date{}
\renewcommand\tableofcontents{\listoftoc*{toc}} 
\author[1]{Piotr Faliszewski}
\author[1]{Stanisław Kaźmierowski}
\author[2]{Grzegorz Lisowski} 
\author[3,4]{Ildikó~Schlotter}
\author[5]{Paolo Turrini}
\affil[1]{AGH University of Kraków, Poland} 
\affil[2]{University of Groningen, The Netherlands} 
\affil[3]{ELTE Centre for Economic and Regional Studies, Hungary} 
\affil[4]{Budapest University of Technology and Economics, Hungary} 
\affil[5]{University of Warwick, United Kingdom} 
\title{Computing Equilibrium Nominations in Presidential Elections}
\newtheorem{lemma}{Lemma}
\newtheorem{definition}{Definition}
\newtheorem{observation}{Observation}
\newtheorem{theorem}{Theorem}
\newtheorem{corollary}{Corollary}
\theoremstyle{definition} 
\newtheorem{remark}{Remark}
\newtheorem{example}{Example}
\newcommand{\linkproof}[1]{%
    \hyperref[#1]{$\star$}%
}
\renewcommand{\phi}{\varphi}
\renewcommand{\leq}{\leqslant}
\renewcommand{\geq}{\geqslant}
\definecolor{darkred}{rgb}{0.64,0,0}
\definecolor{darkcyan}{rgb}{0,0.55,0.55}
\newcommand{\rowcolor}[1]{\textcolor{black}{#1}}
\newcommand{\columncolor}[1]{\textcolor{black}{#1}}
\newcommand{\nfgame}[1]{%
\setsepchar{ }
\readlist\arg{#1}
\begin{tikzpicture}[scale=0.5]
	\node (RT) at (-2,1) [label=left:\rowcolor{\arg[1]}] {};
\node (RB) at (-2,-1) [label=left:\rowcolor{\arg[2]}] {};
\node (CL) at (-1,2) [label=above:\columncolor{\arg[3]}] {};
\node (CR) at (1,2) [label=above:\columncolor{\arg[4]}] {};
\node (RTL) at (-1.4,0.6) {\rowcolor{\arg[5]}}; 
\node (CTL) at (-0.6,1.4) {\columncolor{\arg[6]}}; 
\node (RBL) at (-1.4,-1.4) {\rowcolor{\arg[7]}};
\node (CBL) at (-0.6,-0.6) {\columncolor{\arg[8]}};
\node (RTR) at (0.6,0.6) {\rowcolor{\arg[9]}};
\node (CTR) at (1.4,1.4) {\columncolor{\arg[10]}};
\node (RBR) at (0.6,-1.4) {\rowcolor{\arg[11]}};
\node (CBR) at (1.4,-0.6) {\columncolor{\arg[12]}};
\draw[-,very thick] (-2,-2) to (2,-2);
\draw[-,very thick] (-2,0) to (2,0);
\draw[-,very thick] (-2,2) to (2,2);
\draw[-,very thick] (-2,-2) to (-2,2);
\draw[-,very thick] (0,-2) to (0,2);
\draw[-,very thick] (2,-2) to (2,2);
\draw[-,very thin] (-2,2) to (0,0);
\draw[-,very thin] (0,0) to (2,-2);
\draw[-,very thin] (-2,0) to (0,-2);
\draw[-,very thin] (0,2) to (2,0);
\end{tikzpicture}}
\def\pospres{\textsc{Possible President}}
\def\necpres{\textsc{Necessary President}}
\def\findNashEq{\textsc{Equilibrium President}}
\def\Nashexistence{\textsc{Equilibrium Existence}}
\def\scheme{scheme}
\def\schemes{schemes}
\def\P{\mathcal{P}}
\def\E{\mathcal{E}}
\def\S{\mathcal{S}}
\def\NP{\mathsf{NP}}
\def\coNP{\mathsf{coNP}}
\newcommand{\opentriangle}{%
  \raisebox{0.2pt}{\makebox[0.77778em]{%
    \setlength{\unitlength}{0.6em}%
    \linethickness{0.4pt}\roundjoin
    \begin{picture}(1,1)
    \polygon(0,0)(1,0)(1,1)
    \end{picture}%
  }}%
}
\newenvironment{proofsketch}{\par
  \pushQED{\hfill \opentriangle}%
   \noindent 
   {\itshape Proof sketch.} \ignorespaces
}{%
  \popQED\endtrivlist
}
\DeclareRobustCommand{\abbrevcrefs}{%
  \Crefname{theorem}{Thm.}{Thms.}%
  \Crefname{example}{Ex.}{Exs.}%
  \Crefname{proposition}{Pr.}{Prs.}%
  \Crefname{corollary}{Cor.}{Cors.}%
}
\DeclareRobustCommand{\Shcref}[1]{{\abbrevcrefs\Cref{#1}}}
\begin{document}

\maketitle

\begin{abstract}
We study strategic candidate nomination by parties in elections decided by Plurality voting. Each party selects a nominee before the election, and the winner is chosen from the nominated candidates based on the voters' preferences. We introduce a new restriction on these preferences, which we call \textit{party-aligned single-peakedness}: all voters agree on a common ordering of the parties along an ideological axis, but may differ in their perceptions of the positions of individual candidates within each party. The preferences of each voter are single-peaked with respect to their own axis over the candidates, which is consistent with the global ordering of the parties.
We present a polynomial-time algorithm for recognizing whether a preference profile satisfies party-aligned single-peakedness. 
In this domain, we give polynomial-time algorithms for deciding whether a given party can become the winner under some (or all) nominations, and whether this can occur in some pure Nash equilibrium. 
We also prove a tight result about the guaranteed existence of pure strategy Nash equilibria for elections with up to three parties for single-peaked and party-aligned single-peaked preference profiles.
\end{abstract}



\section{Introduction}

Let us consider a certain computer science department that needs to
select its new head. The department has three research groups, one
focused on artificial intelligence, one working on theoretical computer
science, and one devoted to distributed systems. In the past, each
group had just one candidate, and the person who got the most votes was
selected as the head, for a four-year term. However, this time more people
expressed their interest in the position; the groups---worried by the
possibility of splitting the vote---decided that each of them will nominate
only a single one.
%
Yet, how should they decide on whom to choose?

\citet{faliszewski2016} studied this problem by looking for a
\emph{necessary winner}, i.e., a candidate who wins irrespective of
the nominations by the other groups, or by looking for a
\emph{possible winner}, i.e., a candidate who wins, provided the other
groups' nominations are favourable (formally, they referred to such
candidates as \emph{necessary} and \emph{possible president}, as they
followed a naming scheme from politics). Yet, asking for a necessary
winner is too demanding---such a candidate may fail to exist---and
choosing a possible winner is too optimistic---there is no reason for
the other groups to cooperate. Hence, we are looking for a
game-theoretic solution: we model the scenario as a game, where each
group is a player, each possible nominee is a strategy, and we are
looking for a pure strategy Nash equilibrium (a group's utility is $1$ if its
nominee wins and it is $0$ otherwise), focusing on the classic
Plurality rule (whoever gets the most votes wins). We are interested in
two types of results: we analyse when equilibria are guaranteed to
exist and, if that is not the case, what is the complexity of deciding
their existence. Henceforth, we follow the politics-based
naming convention of \citet{faliszewski2016}: we speak of parties that
nominate candidates, and we refer to the candidates who win under a
given pure strategy Nash equilibrium as \emph{equilibrium presidents}.

\paragraph{Our Contribution.} We find that the guarantees for
equilibria existence, as well as the complexity of recognising
equilibria, are very fragile and strongly depend on the nature of the
voters' preferences. In particular, we distinguish three types of
preference profiles that the voters may have. First, we analyse
1D-Euclidean preferences~\citep{ene-hin:b:spatial,ene-hin:b:spatial2},
where each candidate and each voter is a point on the line, and the
voters rank the candidates in the order of their distance from their
points. Second, we consider the classic single-peaked
setting~\citep{bla:b:polsci:committees-elections}, where there is a
commonly agreed order (or axis) of the candidates---such as the
political left-to-right spectrum---and every prefix of every vote
forms an interval within this axis. Finally, we introduce the notion
of {\em party-aligned single-peakedness}.


In party-aligned single-peaked preferences, parties are ordered along
a given axis, but each voter has their perceived axis, obtained
from the party axis by replacing each party with its members, in
whatever order the voter prefers.  In other words, all voters agree on
the positioning of the parties along the party axis, but they may
disagree on the precise positioning of the parties' candidates. Such
differences may stem, for example, from differing beliefs about some
better known or local candidate, whom a voter perceives as more (or
less) aligned with their ideals than other candidates from the same
party. For example, consider parties $A=\{a_1,a_2\}$ and $B=\{b\}$ and
the next three votes:
\[
    v_1 : a_2 \succ b \succ a_1; \quad 
    v_2 : a_1 \succ b \succ a_2;
    \quad
    v_3 : a_1 \succ a_2 \succ b
\]
based on their following perceived axes:
\[
    v_1,v_3 : a_1 \triangleleft a_2 \triangleleft b; 
    \qquad
    v_2 : a_2 \triangleleft a_1 \triangleleft b.
\]
These preferences are not single-peaked---as they rank three
different candidates in the last position, which is impossible in
single-peaked elections---but they are party-aligned
single-peaked. The mismatch in voters' perception can give rise to
non-trivial strategic decisions from the parties' side.  While
party~$B$ can only nominate~$b$, party~$A$ has to make a choice
between~$a_1$ and~$a_2$.
If party~$A$ nominates~$a_2$, voter~$v_2$ perceives party $B$ as
preferable to party~$A$. 
By contrast, this is not the case
for~$v_1$, who prefers~$a_2$ to~$b$. However, should party $A$
nominate $a_1$ instead, they would end up gaining the vote of $v_2$
but losing that of $v_1$ to party $B$.
Essentially, under party-aligned single-peakedness, a move to steal
voters from one party can lead to losing some other ones.

We defer some of our proofs to
\ifshort the 
full version of the paper~\cite{fullversion}, 
\else
the appendix 
\fi
and mark such  results with the~$\star$ symbol.

\begin{remark}
  Though neither 1D-Euclidean nor single-peaked preferences need to
  put members of a given party on consecutive positions on the common
  axis (or line), all voters have to follow the same one. By contrast, party-aligned single-peaked preferences require members
  of each party to take consecutive positions on each axis, but
  voters can freely order specific party members along their
  individual axes.
\end{remark}

We establish the following results:
\begin{enumerate}
\item Pure strategy Nash equilibria always exist, provided there are at most three parties and the preferences are both single-peaked and
  party-aligned single-peaked, but may fail to exist for 
  four parties (even with at most two members in each);
  for 1D-Euclidean elections, equilibria may fail to exist even for
  two parties of size at most two.

\item Whether a pure strategy Nash
  equilibrium exists in a party-aligned single-peaked election can be decided in polynomial time, but
  the problem 
  is $\NP$-hard in 1D-Euclidean elections (and, hence,
  under single-peaked ones). 
  Deciding if a given party
  can win in some pure strategy Nash equilibrium, or if it can win under some or all
  nominations from other parties,
  is also polynomial-time solvable in party-aligned single-peaked elections.
\end{enumerate}

\begin{remark}
  We focus on pure strategy Nash equilibria, which we henceforth simply refer to as Nash equilibria.
  Although allowing for randomisation is an important aspect of strategic decisions, we believe ``pure" candidate selections are natural and often occurring events in the decision-making process of parties, e.g., through the use of primaries \cite{borodin2024primarily}.
Besides related models of strategic nominations \cite{HarrensteinLST21, sabato2017real}, the focus on pure strategy Nash equilibria is also widely established in other areas of computer science, such as the logical analysis of strategic reasoning in games, including, e.g., Boolean games~\cite{AgotnesHHW13, GutierrezHPW21}. 
\end{remark}

\paragraph{Related Literature.}

The line of research on party-based election models, as introduced by \citet{faliszewski2016}, has been expanded by taking into account different voting rules and investigating the related problems also from the parameterised complexity viewpoint by \citet{misra2019parameterized}, \citet{cechlarova2023hardness}, \citet{schlotter2024parameterized}, and \citet{ildi_aamas2025}.
In a related vein, \citet{pie-szu:c:alliances} analysed a family of rules where nominations are not needed.

 Finding a Nash equilibrium was also studied in a similar model describing strategic nominations over different territories by \citet{HarrensteinT22}. 
%
Equilibrium existence and computation for nominee selection problems was further studied 
in the context of the Hotelling-Downs framework where both voters and candidates are located on a line, which induces a natural preference ordering of candidates \cite{HarrensteinLST21,deligkas2022parameterized}. 
In a related paper, \citeauthor{sabato2017real} restricts candidates' and voters' positioning to intervals, which has strong implications on equilibrium existence.
Decision-making models in which strategic nomination has been studied include tournaments played by coalitions of players \cite{Lisowski0T22,lisowski2022strategic}.


Another important related line of research is \emph{strategic candidacy}; see \cite{brill2015strategic, dutta2001strategic, eraslan2004strategic}. There, candidates have preferences over their opponents and may decide not to participate in an election to achieve a more favoured outcome. Such strategic behaviour may arise, e.g., due to certain costs incurred by participation \cite{obraztsova2015strategic,elkind2015equilibria}. Finally, we mention that our research is directly related to the study of nominee selection in primaries; see, e.g., the work of \citet{borodin2019primarily}.

Structured domains are surveyed in depth by
\citet{elk-lac-pet:t:restricted-domains-survey}. A variant of
single-peakedness closely related to ours  is the domain due to
\citet{cor-gal-spa:c:sp-width,cor-gal-spa:c:spsc-width} where each voter has to
rank members of each party consecutively; note that this is not required for party-aligned single-peakedness.

\section{Preliminaries}

\subsection{Strategic Nominee Selection}

For an integer~$k\geq 0$ let $[k]=\{1,\dots,k\}$ with $[0]=\emptyset$ .

Consider a set~$C$ of candidates and a set $\mathcal{P} = \{P_1, \ldots, P_k\}$ of  \emph{parties} where $\P$ is a partitioning of the candidate set~$C$.
%
Let $V = \{v_1, \ldots, v_n\}$ be a set of voters, where each voter~$v$  has a strict preference order~$\succ_v$ over the candidate set~$C$; we call 
the collection $(\succ_v)_{v \in V}$ \emph{preference profile} of the voters.

Our framework consists of two phases: a {\em nomination phase}, where each party selects its candidate, and an {\em election phase}, where voters choose among the nominated candidates based on their preferences.
Thus, a \emph{nomination \scheme} is a tuple $(c_1, \ldots, c_k)$ where $c_i \in P_i$ for all $i \in [k]$.
Given a nomination \scheme\ $(c_1, \ldots, c_k)$, the \emph{reduced election} takes place over the set $C' = \{c_1, \ldots, c_k\}$ of 
\emph{nominees}. 
We use the \emph{Plurality} voting rule where each voter \emph{votes for}  their favourite candidate among~$C'$, and the candidates with the highest number of votes obtained---their \emph{score}---are the \emph{winners}; the party whose nominee is a winner 
is also called a \emph{winner}. 
Some candidate or party is a winner in some nomination scheme~$(c_1,\dots,c_k)$ if it is a winner in the reduced election over~$\{c_1,\dots,c_k\}$.

%

 We say that a candidate~$c'_i \in P_i$ is a \emph{Nash deviation} by party~$P_i$ for some nomination \scheme\ $(c_1,\dots,c_k)$ if $c'_i \in P_i \setminus \{c_i\}$, and $c_i$ is not a winner in the election resulting from $(c_1,\dots,c_k)$
 but $c'_i$ is a winner in the election resulting from $(c_1,\dots,c'_i,\dots,c_k)$.
 A nomination \scheme\ 
 is a \emph{(pure) Nash equilibrium} if it admits no Nash deviation.

\subsection{Structured Preferences}


Let us now introduce two types of restrictions on voters' preference profiles that may arise in elections where there is a single ideological line which strongly determines voters' behaviour such as, e.g., the left-to-right axis in politics.

\paragraph{Single-Peaked Profiles.} 
    We say that a preference profile $(\succ_v)_{v \in V}$ is \emph{single-peaked} if there exists a total order~$\triangleleft$ over the set~$C$ of candidates called an \emph{axis} 
such that the preferences of each voter~$v$ are \emph{single-peaked with respect to the axis $\triangleleft$}, meaning that for every three candidates $a,b,c \in C$ with $a \triangleleft b \triangleleft c$ the relation $a \succ_v b$ implies $b \succ_v c$.
Roughly speaking, this means that candidates can be placed along a line such that each voter~$v$' preferences are derived from the candidates' distances to $v$'s ``ideal point'' on the line.
%
%
 
\paragraph{1-D Euclidean Profiles.}
A special case of single-peaked profiles appears in \emph{1-D Euclidean elections} where voters and candidates are located on the real line, and voters' preferences over candidates stem from their distances. 
\begin{definition}\label{def:Euclidian}
We say that an election $E=(C,V)$ is \emph{1-D Euclidean} if there is a mapping $f: C \cup V \rightarrow \mathbb{R}$, such that for every voter $v$ in $V$ and every pair $c_i, c_j$ in $C$, $c_i \succ_v c_j$ if and only if $|f(v) - f(c_i)| < |f(v) - f(c_j)|$.
\end{definition}
Although 1-D Euclidean elections allow for voters to be indifferent between candidates that are equally close to them, we will ensure that the 1-D Euclidean elections used in this paper induce strict preferences.


\section{Party-Aligned Single-Peakedness}

We assume that parties can be ordered along a global axis recognised by all voters, with candidates belonging to the same party placed contiguously. 
However, candidates within a party may be perceived differently by voters, who therefore may disagree on where different candidates within a given party are located along the axis. 
Formally, we say that a preference profile is \emph{party-aligned single-peaked} if there exists a \emph{party axis}~$\triangleleft_{\mathcal{P}}$, i.e., a total order over the set~$\mathcal{P}$ of parties
%
%
such that, for every voter $v \in V$, there exists a \emph{perceived axis} $\triangleleft_v$ over the candidate set~$C$ 
for which 

\begin{itemize}
  \item $v$'s preference order~$\succ_v$ is single-peaked w.r.t.~$\triangleleft_v$, and
  \item $\triangleleft_v$ is \emph{compatible} with $\triangleleft_\P$, meaning that 
for each two candidates~$c_i$ and~$c_j$ belonging to different parties~$P_i$ and~$P_j$, respectively, $c_i \triangleleft_v c_j$ implies $P_i \triangleleft_{\mathcal{P}} P_j$.
\end{itemize}
%

\begin{example}
Consider a set~$\P=\{P_a,P_b,P_c,P_d\}$  of parties with $P_a=\{a\}$, $P_b=\{b_1,b_2\}$,
$P_c=\{c_1,c_2\}$, and $P_d=\{d\}$, and three voters whose preferences are:
\begin{align*}
    v &: c_2 \succ b_1 \succ c_1 \succ d \succ b_2 \succ a; \\
    w &: b_1 \succ c_1 \succ b_2 \succ c_2 \succ a \succ d; \\
    z &: b_2 \succ c_2 \succ c_1 \succ d \succ b_1 \succ a. 
\end{align*}
The voters' preferences are party-aligned single-peaked with party axis $P_a \triangleleft_\P P_b  \triangleleft_\P P_c  \triangleleft_\P P_d$ and perceived axes:

\begin{center}$
\quad\quad\quad\quad \begin{array}[b]{l@{\hspace{1pt}}
l@{\hspace{1pt}}l@{\hspace{1pt}}l@{\hspace{1pt}}l@{\hspace{1pt}}l}
    a \,\,\triangleleft_{v} & b_2 \,\,\triangleleft_{v} & b_1 \,\,\triangleleft_{v} & c_2 \,\,\triangleleft_{v} & c_1 \,\,\triangleleft_{v} & d; \\
    a \,\,\triangleleft_{w} & b_2 \,\,\triangleleft_{w} & b_1 \triangleleft_{w} & c_1 \,\,\triangleleft_{w} & c_2 \,\,\triangleleft_{w} & d; \\
    a \,\,\triangleleft_{z} & b_1 \,\,\triangleleft_{z} & b_2 \,\,\triangleleft_{z} & c_2 \,\,\triangleleft_{z} & c_1 \,\,\triangleleft_{z} & d. 
\end{array}
\quad\quad\quad\phantom{.}
 \mbox{\qedhere}$
\end{center}
\end{example}

   \subsection{Recognising Party-Aligned Single-Peaked Profiles.}
\label{sec:recognition}
We present a polynomial-time algorithm for recognising preference profiles that are party-aligned single-peaked. Formally, we deal with the following computational problem.

\medskip
\noindent
\begin{minipage}{0.88\columnwidth}
\fbox{
    \parbox{1.08\columnwidth}{
    \textsc{Recognising party-aligned single-peakedness$\!\!\!\!\!\!\!\!$} 
    \begin{description}
        \item[{\bf Input:}]
            A set~$C$ of candidates partitioned into a set~$\mathcal{P}$
            of parties, and a set~$V$ of voters with preference profile $(\succ_v)_{v \in V}$ over the candidate set~$C$. 
        \item[{\bf Question:}]
            Is the preference profile $(\succ_v)_{v \in V}$ party-aligned single-peaked?
    \end{description}
    }
    }
\end{minipage}
\medskip

The rest of this section proves the following result.
\begin{theorem}
    \label{thm:recognition-poly} 
    \textsc{Recognizing party-aligned single-peakedness} 
    can be solved, and a suitable party axis---if existent---can be computed, in $O(|C| \cdot |V| \cdot |\mathcal{P}|)$ time.
\end{theorem}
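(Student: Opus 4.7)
My plan is to reduce the problem to classical single-peakedness recognition at the party level, then verify the resulting candidate axis on the full preferences. First, for each voter $v$ I would construct the \emph{party-induced preference} $\pi_v$, which orders the parties by the rank of each party's top candidate in $\succ_v$. I claim that if the profile is party-aligned single-peaked with party axis $\triangleleft_\P$, then every $\pi_v$ is single-peaked with respect to $\triangleleft_\P$: for any party $P$ that does not contain $v$'s top candidate, the $v$-top candidate of $P$ is the element of $P$ lying closest to $v$'s peak in any valid perceived axis, and single-peakedness ranks these distinguished candidates by party-axis distance from the party containing $v$'s peak. I would then apply a standard single-peakedness recognition algorithm to $\{\pi_v\}_{v\in V}$ to obtain a candidate axis $\triangleleft_\P$, or reject if none exists.

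Second, I would prove a structural characterization of when a voter's full preference is consistent with a given candidate axis. Fix a voter $v$ with peak party $P^*$ (the party of $v$'s top candidate), and let $B$ and $A$ denote the parties strictly before and after $P^*$ in $\triangleleft_\P$. My characterization asserts that $\succ_v$ admits a compatible perceived axis iff \textup{(I)} restricted to candidates of $B$-parties, $\succ_v$ decomposes into per-party blocks ordered from the $B$-party closest to $P^*$ outwards; \textup{(II)} the analogous condition holds for $A$-parties; and \textup{(III)} each non-peak $c \in P^*$ is preferred by $v$ either to the $v$-top candidate of the party immediately before $P^*$, or to the $v$-top of the party immediately after $P^*$ (vacuously true if the corresponding side is empty). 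Necessity is immediate from single-peakedness. For sufficiency I would construct a perceived axis $\triangleleft_v$ by listing: the $B$-parties in axis order (each block in reverse-preference order), then a set $L \subseteq P^* \setminus \{\textup{peak}\}$ placed on the left of the peak, then the peak, then the complementary set $R$ on the right, then the $A$-parties in axis order. Condition~(III) guarantees that $L$ and $R$ can be chosen to satisfy the left-chain and right-chain monotonicity requirements; a case analysis over triples $a \triangleleft_v b \triangleleft_v c$ then confirms that $\succ_v$ is single-peaked with respect to $\triangleleft_v$.

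For the running time, computing every $\pi_v$ takes $O(|C|\cdot|V|)$ time in total, the party-level single-peakedness recognition is polynomial in $|V|$ and $|\mathcal{P}|$, and each of conditions (I)--(III) is verifiable in $O(|C|)$ time per voter. Should the classical recognition leave ambiguity (such as a PQ-tree of admissible axes), I would iterate over $O(|\mathcal{P}|)$ representative axes, giving the claimed $O(|C|\cdot|V|\cdot|\mathcal{P}|)$ runtime. The main obstacle will be the sufficiency direction of the characterization: the case analysis to verify single-peakedness on all triples must carefully handle candidates lying on either side of the peak as well as those within the peak party itself. A secondary subtlety is showing that iterating over $O(|\mathcal{P}|)$ representative axes suffices either to find a valid one or to certify that no valid axis exists.
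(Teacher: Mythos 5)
Your second stage---the per-voter characterization of compatibility with a \emph{fixed} party axis via conditions (I)--(III)---is essentially the paper's Lemma~\ref{lem:vote_party_single_peaked}, and your necessity claim that each party-induced order $\pi_v$ is single-peaked with respect to any valid party axis is also correct (it follows from the block structure $l_i \succ_v h_{i+1}$ on each side of the peak party). The genuine gap is in the first stage, where the axis is actually found. Single-peakedness of the party-induced profile $\{\pi_v\}_{v\in V}$ is a strictly weaker condition than party-aligned single-peakedness, so you must search \emph{within} the set of all axes compatible with $\{\pi_v\}$ for one that also passes your per-voter test; and that set can have size $2^{|\mathcal{P}|-1}$ (for instance, when all voters induce the same party order, every axis obtained by splitting the non-peak parties into a left group and a right group, each ordered consistently, is compatible). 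You assert that iterating over $O(|\mathcal{P}|)$ ``representative'' axes suffices, but you give no argument for why a valid party axis, if one exists, must appear among these representatives, and none is apparent: the within-party information that distinguishes the good axes from the bad ones is exactly what the reduction to $\{\pi_v\}$ discards. You flag this as a ``secondary subtlety,'' but it is the crux of the theorem; as written, your algorithm can incorrectly reject.

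The paper avoids this by never decoupling axis construction from the full preferences: it first pins the at most two ``bottom'' parties (those containing a last-ranked candidate of some vote) to the ends of the axis (Lemma~\ref{lem:bottom_parties}), then repeatedly extends this extremal placement inward using constraints read off individual votes (Lemma~\ref{lem:party_fixed_placement}), and when no vote forces the placement of the next party, it deletes the already-placed parties---whose candidates then form a common suffix of every vote---and recurses on the remainder (Lemma~\ref{lem:reducing_problem}). To repair your proof you would need an analogous argument showing that the residual ambiguity left by $\{\pi_v\}$ is always harmless or can be resolved greedily, which in effect amounts to reproving these structural lemmas.
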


\subsubsection{Verification of Party-Aligned Single-Peakedness Under Fixed Party Ordering.}

First, we formulate the necessary and sufficient conditions for a single-vote profile to be party-aligned single-peaked under a fixed ordering 
of parties. 

\begin{restatable}[\linkproof{sec:proof-of-lemvotePSP}]{lemma}{lemvotepartysinglepeaked}
\label{lem:vote_party_single_peaked}
Consider a vote~$\succ_v$ over the candidate set of parties $P_1,\dots,P_m$.
For each party~$P_i$, let $h_i$ and~$l_i$ denote $v$'s most- and least-preferred candidate within~$P_i$. 
Let $P_j$ be the party containing $v$'s top candidate.
Then $\succ_v$ is party-aligned single-peaked with party axis $P_1 \triangleleft_\P \dots \triangleleft_\P P_m$ if and only if the following conditions are met.
\begin{enumerate}
    \item[(a)] If $P_j \notin \{P_1,P_m\}$, then $l_j \succ_v h_{j-1}$ or $l_j \succ_v h_{j+1}$.
    \item[(b)] For each parties $P_i, P_{i+1}$ with $i>j$ we have $l_i \succ_v h_{i+1}$. 
    \item[(c)] For each parties $P_{i-1}, P_i$ with $i < j$ we have $l_i \succ_v h_{i-1}$. 
\end{enumerate}  
\end{restatable}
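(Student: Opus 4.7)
The plan is to prove both directions of the equivalence by tracking the structural consequences of single-peakedness along a perceived axis $\triangleleft_v$. The key underlying observation is that within the contiguous block occupied by any party $P_i \ne P_j$ on $\triangleleft_v$, single-peakedness forces candidates closer to $P_j$ to be preferred, because the top candidate of $\succ_v$ is $h_j$ and preferences strictly decrease as one moves away from it. Consequently, within $P_i$'s block the candidate adjacent to $P_j$'s side must be $h_i$ and the one at the opposite end must be $l_i$. Necessity of (b) and (c) then follows by applying single-peakedness across inter-party boundaries: for $i > j$, both $l_i$ and $h_{i+1}$ lie strictly to the right of the peak on $\triangleleft_v$ with $l_i$ closer, so $l_i \succ_v h_{i+1}$, and (c) is symmetric. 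For necessity of (a), the element $l_j$ must sit at one of the two extremes of $P_j$'s block on $\triangleleft_v$, since preferences strictly decrease from the peak within $P_j$ itself; whichever extreme $l_j$ occupies, its outside neighbor across the block boundary is $h_{j-1}$ or $h_{j+1}$, and boundary single-peakedness yields the corresponding disjunct of (a).

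For sufficiency I construct $\triangleleft_v$ explicitly. First order parties along $\triangleleft_v$ in accord with $\triangleleft_\P$. Within $P_i$ for $i < j$, arrange candidates from left (least preferred $l_i$) to right (most preferred $h_i$); for $i > j$ use the mirror arrangement. For $P_j$, whenever (a) supplies $l_j \succ_v h_{j-1}$ (or when $j = 1$, so there is no left neighbor), place $h_j$ at the rightmost position of $P_j$'s block with the remaining members to its left in decreasing order of preference (so $l_j$ is leftmost); in the complementary case, mirror this and place $h_j$ at the leftmost position with non-peak members to its right. Verifying that $\succ_v$ is single-peaked with respect to $\triangleleft_v$ then reduces to checking that preferences strictly decrease on each side of $h_j$: inside each party's block this holds by construction, and across each inter-party boundary the required inequality is exactly one of (b), (c), or the chosen disjunct of (a), with the trivial $h_j \succ_v h_{j \pm 1}$ handling the boundary whenever $h_j$ itself is $P_j$'s boundary candidate.

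The subtle step---and the reason (a) appears as a disjunction---lies in the placement inside $P_j$ during sufficiency. In general only one of the two disjuncts of (a) may hold, so I cannot split the non-peak members of $P_j$ across both sides of $h_j$: an ``unprotected'' side could then violate single-peakedness at the $P_j$--$P_{j \pm 1}$ boundary, since the extreme candidate on that side might fail to dominate $h_{j-1}$ or $h_{j+1}$. The asymmetric construction above sidesteps this by pushing all non-peak members of $P_j$ onto the single side guaranteed by (a), so that the only constrained extreme is $l_j$---for which the needed inequality is supplied by (a)---while the opposite extreme is $h_j$ itself, whose dominance over its neighbor is automatic since $h_j$ is the top candidate of $\succ_v$.
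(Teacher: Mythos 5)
Your proof is correct and follows essentially the same strategy as the paper's: necessity from the fact that a compatible perceived axis forces each non-peak party's block to sit entirely on one side of the peak with $h_i$ adjacent to $P_j$ and $l_i$ at the far end, and sufficiency by an explicit construction of a perceived axis whose only non-trivial boundary checks are exactly conditions (a)--(c). The one difference is cosmetic: in the sufficiency construction the paper splits $P_j$'s non-peak members across both sides of $h_j$ (sending those preferred to $h_{j+1}$ to the right of the peak), whereas you push all of them onto the single side protected by the available disjunct of (a) and let $h_j$ guard the opposite boundary trivially --- both arrangements are valid.
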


Notice that whenever there are at most two parties, then conditions (1)--(3) of Lemma~\ref{lem:vote_party_single_peaked} are automatically true for any fixed party axis. Thus, we have the following consequence.
\begin{corollary}
\label{cor:twoparties-always-partySP}
    Every preference profile with at most two parties is party-aligned single-peaked.
\end{corollary}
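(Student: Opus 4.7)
The plan is to derive this directly from \Cref{lem:vote_party_single_peaked} by verifying that, when $m \leq 2$, the three conditions listed there are vacuously satisfied for every voter under any total order of the parties. Since party-aligned single-peakedness of a profile only requires the existence of a single party axis that works simultaneously for all voters, it suffices to fix an arbitrary party axis and then check vote by vote that each individual preference $\succ_v$ is party-aligned single-peaked with respect to it.

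First I would handle the trivial case $m = 1$ separately: with only one party, conditions (b) and~(c) reference no pair of parties at all, and condition~(a) is inapplicable because the top party is simultaneously the first and the last in the axis. For the main case $m = 2$, I would fix any ordering $P_1 \triangleleft_\P P_2$ and pick an arbitrary voter $v$ with top party $P_j \in \{P_1, P_2\}$. Condition~(a) is vacuous since $P_j \in \{P_1, P_m\}$ by hypothesis. For condition~(b), an index $i > j$ with $P_{i+1}$ defined would require $i \geq j+1$ and $i+1 \leq 2$, so $j \leq 0$, which is impossible. Symmetrically, condition~(c) requires an index $i < j$ with $i-1 \geq 1$, i.e., $j \geq 3$, again impossible. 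Hence all three conditions hold vacuously.

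Invoking \Cref{lem:vote_party_single_peaked}, each $\succ_v$ is party-aligned single-peaked with respect to the fixed axis, so the whole profile is as well. I do not anticipate any real obstacle here; the corollary is essentially a bookkeeping observation that the quantifiers in conditions (a)--(c) range over the empty set when $m \leq 2$. The only minor care point is to state the conclusion in the profile-wide form required by the definition, by noting that the same party axis works for every voter because it was fixed independently of~$v$.
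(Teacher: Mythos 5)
Your proof is correct and matches the paper's reasoning exactly: the paper likewise observes that for at most two parties the conditions of Lemma~\ref{lem:vote_party_single_peaked} hold vacuously for any fixed party axis, and your explicit index bookkeeping for conditions (a)--(c) just spells out what the paper states in one sentence.
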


\subsubsection{Placing the Bottom Candidates.}

Following an idea by~\citet{Escoffier_08}, 
we present a lemma about the placement of a party that contains the lowest-ranked candidate in some vote.

\begin{restatable}[\linkproof{sec:proof-oflembottomparties}]{lemma}{lembottomparties}   
\label{lem:bottom_parties}
     Let $c \in P_i$ be the candidate ranked in the last position by a voter~$v$. If $\succ_v$ is party-aligned single-peaked with  party axis $\triangleleft_{\mathcal{P}}$, then $P_i$ is either in the leftmost or the rightmost position in~$\triangleleft_{\mathcal{P}}$.
\end{restatable}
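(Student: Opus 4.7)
The plan is to proceed by contradiction: suppose $P_i$ sits strictly between two other parties on the party axis $\triangleleft_{\mathcal{P}}$, and derive a contradiction with the last-place status of $c$ by using single-peakedness on the voter's perceived axis.

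First I would unpack the definition of party-aligned single-peakedness to obtain a perceived axis $\triangleleft_v$ on the full candidate set $C$ that (i) makes $\succ_v$ single-peaked and (ii) is compatible with $\triangleleft_{\mathcal{P}}$. Assuming $P_i$ is neither leftmost nor rightmost in $\triangleleft_{\mathcal{P}}$, I would then pick parties $P_\ell \triangleleft_{\mathcal{P}} P_i \triangleleft_{\mathcal{P}} P_r$ and any candidates $c_\ell \in P_\ell$ and $c_r \in P_r$. By the compatibility clause of party-aligned single-peakedness, the relative position of $c_\ell, c, c_r$ on $\triangleleft_v$ must mirror the order of their parties on $\triangleleft_{\mathcal{P}}$, giving $c_\ell \triangleleft_v c \triangleleft_v c_r$.

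Next I would invoke the single-peakedness of $\succ_v$ with respect to $\triangleleft_v$. Since $c$ is $v$'s least-preferred candidate, we have both $c_\ell \succ_v c$ and $c_r \succ_v c$. Applying the single-peakedness condition to the triple $c_\ell \triangleleft_v c \triangleleft_v c_r$, the inequality $c_\ell \succ_v c$ forces $c \succ_v c_r$, contradicting $c_r \succ_v c$. This contradiction establishes that $P_i$ must be a boundary party in $\triangleleft_{\mathcal{P}}$.

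I do not expect a real obstacle here: the argument is essentially the standard observation that on any single-peaked axis, the last-ranked alternative must lie at one of the endpoints, transported through the party axis via compatibility. The only subtlety is making sure I apply the compatibility condition correctly to place $c$ strictly between $c_\ell$ and $c_r$ on $\triangleleft_v$, which is immediate once I fix $P_\ell$ and $P_r$ to lie strictly on opposite sides of $P_i$ in $\triangleleft_{\mathcal{P}}$.
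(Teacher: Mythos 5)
Your proposal is correct and follows essentially the same argument as the paper: assume $P_i$ is interior, use compatibility to sandwich $c$ between candidates of flanking parties on the perceived axis $\triangleleft_v$, and derive a contradiction with single-peakedness since $c$ is ranked last. The paper's proof is just a terser version of exactly this.
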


As a consequence of Lemma~\ref{lem:bottom_parties}, in any party-aligned single-peaked consistent profile, at most two parties may have candidates that are ranked last by at least one voter. 

\subsubsection{Vote-Imposed Restrictions.}
We move on to present some conditions that allow us to extend a partially determined party axis based on the given preference profile. 

Suppose that we already know the position of certain parties in the party axis~$\triangleleft_\P$ to be constructed. 
In particular, assume that we have an \emph{extremal party placement $(L,R)$ for~$\triangleleft_\P$} which consists of two sets of parties $L, R \subseteq \P$ such that $L$ contains the $|L|$ leftmost parties and $R$ contains the $|R|$ rightmost parties according to~$\triangleleft_\P$.


\begin{restatable}[\linkproof{sec:proof-of-partyfixedplacement}]{lemma}{lempartyfixedplacement}
\label{lem:party_fixed_placement}
Let $(L,R)$ be an extremal party placement for some party axis $\triangleleft_\P$, and consider a vote~$\succ_v$ that is party-aligned single-peaked with axis~$\triangleleft_\P$.
Assume that candidates in~$\bigcup (R \cup L)$ do not form the suffix of the vote $\succ_v$.
Let 
\begin{itemize}
    \item $a$ be $v$'s least favourite candidate in $\bigcup{(\mathcal{P} \setminus (L \cup R))}$,
    \item $b$ be $v$'s most favourite candidate in $\bigcup (L \cup R)$, and
    \item $P_a$ and $P_b$ be the parties containing~$a$ and~$b$, respectively.
\end{itemize}
If $P_b \in R$, then $P_a$ directly follows the last party in~$L$ according to~$\triangleleft_\P$, while
if $P_b \in L$, then $P_a$ directly precedes the first party in~$R$ according to~$\triangleleft_\P$.
\end{restatable}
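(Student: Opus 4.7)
The plan is to apply Lemma~\ref{lem:vote_party_single_peaked} to~$\succ_v$ and propagate preference information along~$\triangleleft_\P$ via its chain conditions. I would index the parties so that $L=\{P_1,\dots,P_\ell\}$, the middle parties are $P_{\ell+1},\dots,P_{\ell+s}$, and $R=\{P_{\ell+s+1},\dots,P_m\}$, and let $P_j$ be the party containing $v$'s top candidate. I would start with two preparatory observations. First, the hypothesis that $\bigcup(L\cup R)$ does not form a suffix of~$\succ_v$ is equivalent to $b\succ_v a$: if every placed candidate were below every middle one, then in particular $b$ would sit below~$a$; the converse is immediate. Second, $P_j\notin L$, since otherwise $h_{P_j}\in L$ would be the top candidate of~$L\cup R$, forcing $P_b\in L$ and contradicting $P_b\in R$.

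If $P_j\in R$, every middle and $L$ party sits strictly to the left of $P_j$, so iterating condition~(c) gives $l_i \succ_v h_{i-1} \succeq_v l_{i-1}$ for each $i<j$; hence $l_i$ becomes less preferred as $i$ decreases, so $v$'s worst middle candidate is~$l_{\ell+1}$ and $P_a = P_{\ell+1}$. If instead $P_j$ is a middle party $P_{\ell+t}$, I would first observe that $b = h_{\ell+s+1}$: condition~(b) chains to $h_{\ell+s+1}\succ_v\cdots\succ_v h_m$, so within~$R$ the voter's top candidate is $h_{\ell+s+1}$, and $P_b\in R$ forces the claim. The same rightward chain also yields $l_{\ell+t+1}\succ_v\cdots\succ_v l_{\ell+s}\succ_v b$, which together with $b\succ_v a$ rules out $a$ lying in any middle party strictly to the right of~$P_j$. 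To exclude $a = l_{\ell+t}$ I would invoke condition~(a): if $l_{\ell+t}\succ_v h_{\ell+t+1}$, the right chain extends to $l_{\ell+t}\succ_v b\succ_v a$, contradicting $a=l_{\ell+t}$; if $l_{\ell+t}\succ_v h_{\ell+t-1}$, iterating condition~(c) to the left yields $l_{\ell+t}\succ_v l_{\ell+1}$, contradicting that $l_{\ell+t}$ is $v$'s least preferred middle candidate (the boundary case $t=1$ gives $P_a = P_j = P_{\ell+1}$ directly). Hence $a$ lies in a middle party strictly to the left of~$P_j$, and a final application of condition~(c) pins $a = l_{\ell+1}$, so $P_a = P_{\ell+1}$, as required. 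The case $P_b\in L$ is handled by the mirror argument, swapping the roles of conditions~(b) and~(c).

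I expect the middle-peak subcase to be the delicate point. Condition~(a) of Lemma~\ref{lem:vote_party_single_peaked} is disjunctive and leaves ambiguous which neighbour the peak leans towards, so each branch has to be refuted by a separate chain; it is precisely here that the suffix hypothesis $b\succ_v a$ enters the argument in an essential way, ruling out the possibility that $v$'s least favourite middle candidate sits inside the peak party itself.
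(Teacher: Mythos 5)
Your proof is correct, but it takes a genuinely different route from the paper's. The paper argues directly on the perceived axis in a few lines: the non-suffix hypothesis gives $b \succ_v a$, and every middle party other than~$P_a$ consists entirely of candidates preferred to~$a$; if $P_a$ were not adjacent to~$L$ (with $P_b \in R$), some middle party~$P_i$ and the party~$P_b$ would lie on opposite sides of~$P_a$ along any compatible perceived axis~$\triangleleft_v$, each contributing a candidate strictly preferred to~$a$ --- a ``valley'' at~$a$ that immediately contradicts single-peakedness of~$\succ_v$ with respect to~$\triangleleft_v$, with no case analysis at all. You instead route everything through the characterization of Lemma~\ref{lem:vote_party_single_peaked}, splitting on whether the peak party~$P_j$ lies in~$R$ or among the middle parties, and chaining conditions (b) and (c) to show that the minimum of the~$l_i$'s over the middle parties is attained at the party adjacent to~$L$; the disjunction in condition~(a) then forces your two-branch refutation of $a = l_j$. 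All of your chains are sound, including the boundary cases $t=1$ and $L=\emptyset$, the equivalence of the non-suffix hypothesis with $b \succ_v a$, and the observation that $P_j \in L$ is incompatible with $P_b \in R$. What your version buys is that it is purely order-theoretic --- it never returns to the perceived axis once Lemma~\ref{lem:vote_party_single_peaked} is available; the cost is length and a delicate middle-peak subcase that, as you correctly diagnose, carries the real work, and which simply evaporates in the paper's direct contradiction.
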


\subsubsection{Reducing the Problem.}
When Lemma~\ref{lem:party_fixed_placement} can no longer be applied, we use recursion based on the following lemma.
\begin{restatable}[\linkproof{sec:proof-of-reducingproblem}]{lemma}{lemreducingproblem}
\label{lem:reducing_problem}
Let $\Pi=(\succ_v)_{v \in V}$ be a party-aligned single-peaked preference profile over candidate set~$C$ with a party axis~$P_1 \triangleleft_P \dots \triangleleft_\P P_m$.
Assume further that 
the candidates in ($P_1 \cup \dots \cup P_i) \cup (P_j \cup \dots \cup P_m)$ form a suffix of each vote in~$\Pi$, and let $Q$ be the set of candidates not in this suffix.
If the restriction of $\Pi$ to $Q$ is party-aligned single-peaked with a party axis ${Q_1 \triangleleft_Q \dots \triangleleft_Q Q_k}$, then $\Pi$ is party-aligned single-peaked with the party axis $P_1 , \dots, P_i , Q_1, \dots,  Q_k , P_j, \dots, P_m $.
\end{restatable}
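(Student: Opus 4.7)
The plan is to show, for each voter $v \in V$, that there exists a perceived axis $\triangleleft_v$ of $C$ compatible with the new party axis $P_1, \dots, P_i, Q_1, \dots, Q_k, P_j, \dots, P_m$ with respect to which $\succ_v$ is single-peaked; collecting these axes then establishes the party-aligned single-peakedness of $\Pi$ with the claimed party axis.

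Fix a voter $v$. The hypotheses supply two perceived axes: an axis $\triangleleft_v^{\mathrm{old}}$ of $C$, compatible with the original party axis $P_1 \triangleleft_\P \dots \triangleleft_\P P_m$ and making $\succ_v$ single-peaked, and an axis $\triangleleft_v^Q$ of $Q$, compatible with $Q_1 \triangleleft_Q \dots \triangleleft_Q Q_k$ and making the restriction of $\succ_v$ to $Q$ single-peaked. Setting $L = P_1 \cup \dots \cup P_i$ and $R = P_j \cup \dots \cup P_m$, I would define $\triangleleft_v$ by concatenating three blocks in left-to-right order: the restriction of $\triangleleft_v^{\mathrm{old}}$ to $L$, then $\triangleleft_v^Q$, then the restriction of $\triangleleft_v^{\mathrm{old}}$ to $R$. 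Compatibility of $\triangleleft_v$ with the new party axis is immediate, since within each block the party ordering either is inherited from $\triangleleft_v^{\mathrm{old}}$ (so follows $P_1, \dots, P_i$ inside $L$ and $P_j, \dots, P_m$ inside $R$) or coincides with $Q_1, \dots, Q_k$.

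The core step is verifying single-peakedness of $\succ_v$ with respect to $\triangleleft_v$. Two observations drive the case analysis. First, since $L \cup R$ is a suffix of $\succ_v$, every $x \in Q$ satisfies $x \succ_v y$ for every $y \in L \cup R$; in particular, the peak of $\succ_v$ sits inside $Q$, so preferences strictly increase rightward along $L$ and strictly decrease rightward along $R$ in both $\triangleleft_v^{\mathrm{old}}$ and $\triangleleft_v$. Second, for any triple $a \triangleleft_v b \triangleleft_v c$ containing at most one member of $Q$, the relative order of $a, b, c$ in $\triangleleft_v$ coincides with their relative order in $\triangleleft_v^{\mathrm{old}}$, so the implication $a \succ_v b \Rightarrow b \succ_v c$ is inherited directly from the single-peakedness of $\succ_v$ with respect to $\triangleleft_v^{\mathrm{old}}$. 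Triples with at least two members of $Q$ split into exactly three subcases: if all three lie in $Q$, single-peakedness is inherited from $\triangleleft_v^Q$; if the pattern is $(L, Q, Q)$, the hypothesis $a \succ_v b$ is vacuously false because $b \in Q$ is $\succ_v$-preferred to $a \in L$; and if the pattern is $(Q, Q, R)$, the conclusion $b \succ_v c$ holds automatically because $b \in Q$ is $\succ_v$-preferred to $c \in R$.

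The main obstacle is essentially organisational: verifying that the case split over the $L/Q/R$ locations of $a, b, c$ is exhaustive and that the two observations above genuinely cover each mixed case without leaving a gap. No deeper structural argument seems necessary beyond stitching together the two provided axes along the suffix boundary and exploiting the fact that the peak of $\succ_v$ necessarily lies in the middle block $Q$.
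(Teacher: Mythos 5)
Your proposal is correct and follows essentially the same route as the paper's proof: the same three-block concatenation of the two given perceived axes, followed by a case analysis on how many of $a,b,c$ lie in $Q$, using the suffix property of $L \cup R$ to dispatch the mixed cases. The only (immaterial) difference is which mixed patterns you settle by order-preservation from the old axis versus by the suffix argument.
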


\subsubsection{Algorithm.}
Armed with the aforementioned results, we present the recognition algorithm proving Theorem~\ref{thm:recognition-poly}. 

\textit{Placing Bottom Parties.}
Given a preference profile $\Pi$, we begin by identifying the set $B \subseteq \mathcal{P}$ of bottom parties, i.e., those containing a candidate appearing at the last position in some vote.  Lemma~\ref{lem:bottom_parties} shows that all parties in~$B$ must appear at either the left- or the rightmost position of the hypothetical party axis~$\triangleleft_\P$. 
In particular, if $|B| > 2$, then the profile is not party-aligned single-peaked, and we reject. 
If $|B| \leq 2$, then the exact placement of the parties in~$B$ is not important, because for every party axis witnessing the party-aligned single-peakedness of our profile, the reversed party axis does the same. Hence, we 
place the parties in~$B$ as the left- and (if $|B|=2$) rightmost parties in~$\triangleleft_\P$, creating an extremal party placement $(L,R)$ for~$\triangleleft_\P$ with~$L \cup R =B \neq \emptyset$.

\textit{Vote-Imposed Restrictions.}
Next, we iterate over the set of voters to check whether they impose some restriction on~$\triangleleft_\P$. More precisely, we apply Lemma~\ref{lem:party_fixed_placement} as long as possible, extending our extremal party placement~$(L,R)$---together with the ordering of the parties in~$L \cup R$ according to~$\triangleleft_\P$---by fixing the placement of an additional party in the fashion determined by Lemma~\ref{lem:party_fixed_placement}  in each step. 


\textit{Reducing the Problem.}
If at some point there is no voter for which  Lemma~\ref{lem:party_fixed_placement} can be applied, then the set of candidates contained in our current extremal party placement $(L,R)$ forms a suffix in each vote. Hence, in line with Lemma~\ref{lem:reducing_problem}, we delete all parties in~$L \cup R$ together with all of their candidates from the profile~$\Pi$, and find a suitable party axis for the resulting profile~$\Pi'$ (containing at least one party less than~$\Pi$) in a recursive manner. Note that if $\Pi$ is party-aligned single-peaked, then so is $\Pi'$, and a party axis witnessing the latter can be turned into a party axis for~$\Pi$ using Lemma~\ref{lem:reducing_problem} and the ordering of the parties contained in the extremal party placement~$(L,R)$ determined so far.

\textit{Checking the Resulting Profile.}
After obtaining a party axis~$\triangleleft_\P$, we check if profile~$\Pi$ is party-aligned single-peaked under~$\triangleleft_\P$ by checking the conditions 
in Lemma~\ref{lem:vote_party_single_peaked}. If $\Pi$ is not party-aligned single-peaked under $\triangleleft_\P$, then Lemmas~\ref{lem:bottom_parties}--\ref{lem:reducing_problem} guarantee that $\Pi$ is not party-aligned single-peaked.

\textit{Running Time.} The algorithm runs in $O(|C| \cdot |V| \cdot |\mathcal{P}|)$ time.
First, identifying the bottom parties can be done in $O(|V|)$ time, and such a step is performed at most $|\mathcal{P}|$ times.
Second, checking the instance for vote-imposed restrictions and then applying Lemma~\ref{lem:party_fixed_placement} takes $O(|C| \cdot |V|)$ time; again, such a step is performed at most $|\mathcal{P}|$ times.
Third, checking whether our instance can be  reduced recursively by Lemma~\ref{lem:reducing_problem} takes $O(|C| \cdot |V|)$ time, not counting the time necessary to deal with the reduced instance. This yields an overall running time of $O(|C| \cdot |V| \cdot |\mathcal{P}|)$.

\section{Equilibrium Existence}
\label{sec:equilibrium-existence}
We move on to present the results concerning the existence of Nash equilibria in the considered model.

\begin{theorem}
\label{thm:NE-existence-three-parties}
If voters' preference profiles are single-peaked and also party-aligned single-peaked with at most three parties, then a Nash equilibrium always exists.    
\end{theorem}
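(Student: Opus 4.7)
The plan is to proceed by case analysis on $k = |\mathcal{P}| \leq 3$, unified by a preliminary axis-consistency step. First I would establish a key technical lemma: under the two hypotheses, one can choose a common single-peaked axis $\triangleleft_*$ over the candidate set that also respects the party axis, so that all members of $P_i$ precede all members of $P_j$ on $\triangleleft_*$ whenever $P_i \triangleleft_{\mathcal{P}} P_j$. The idea is to combine single-peakedness (existence of a common axis) with party-aligned single-peakedness (each voter's perceived axis places party members contiguously): if some party's members were split on a common single-peaked axis by a candidate of another party, a voter's perceived axis would conflict with the common axis, and a careful local rearrangement should restore party-contiguity while preserving single-peakedness. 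This step is where I expect the main difficulty to lie.

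Next, given such a refined axis, a simple monotonicity lemma follows from single-peakedness: for any $a, a' \in P_i$ with $a \triangleleft_* a'$ and any $b$ in a different party with $a' \triangleleft_* b$, the relation $a \succ_v b$ implies $a' \succ_v b$ for every voter $v$. This is obtained by applying single-peakedness to the triple $\{a, a', b\}$: either $a \succ_v a'$, in which case $a' \succ_v b$ directly, or $a' \succ_v a$, in which case $a' \succ_v b$ follows by transitivity. Consequently, for an extreme party (one at either end of the party axis), replacing its nominee by a more centrist candidate on $\triangleleft_*$ weakly increases that party's Plurality vote count against any fixed nomination by the other parties, and in particular cannot turn a winning nomination into a losing one.

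Finally, I would construct the equilibrium. For $k \leq 1$ the unique nomination trivially suffices. For $k = 2$, taking each party's most centrist candidate (the rightmost of $P_1$ and the leftmost of $P_2$ along $\triangleleft_*$) yields a Nash equilibrium, since each party's best response is this candidate regardless of the opponent's choice, by the monotonicity lemma. For $k = 3$, let $a^* \in P_1$ and $c^* \in P_3$ be the most centrist members of the extreme parties, and let $b^* \in P_2$ be any best response of $P_2$ against the pair $(a^*, c^*)$: a winner in $(a^*, b^*, c^*)$ whenever $P_2$ has a winning option of the form $(a^*, b, c^*)$, or an arbitrary candidate otherwise. Then $(a^*, b^*, c^*)$ is a Nash equilibrium: by monotonicity, neither extreme party can turn a loss into a win by deviating (any other choice yields weakly fewer votes while the other nominees retain or gain votes), and $P_2$ plays its best response by construction.
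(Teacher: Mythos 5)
Your steps 2 and 3 are sound and essentially reproduce the paper's argument: the paper likewise works with a common axis compatible with the party axis, calls a nomination scheme \emph{centrist} when the leftmost party nominates its rightmost candidate and the rightmost party its leftmost one, uses single-peakedness to show the two extreme parties admit no Nash deviation from any centrist scheme, and then splits on whether the middle party can win in some centrist scheme---which is exactly your best-response choice of $b^*$.

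The genuine gap is your step 1. The lemma you plan to prove---that a profile which is single-peaked (with respect to some common axis) and party-aligned single-peaked always admits a \emph{common} single-peaked axis on which each party occupies a contiguous block---is false. Take $A=\{a_1,a_2\}$, $B=\{b\}$ and two voters $v_1 : a_1 \succ b \succ a_2$ and $v_2 : a_2 \succ b \succ a_1$. This profile is single-peaked with respect to $a_1 \triangleleft b \triangleleft a_2$, and it is party-aligned single-peaked by Corollary~\ref{cor:twoparties-always-partySP} since there are only two parties. Yet no party-contiguous candidate axis works: on $a_1 \triangleleft a_2 \triangleleft b$ (or its reverse), voter $v_1$ prefers $a_1$ to $a_2$ but $b$ to $a_2$, violating single-peakedness, and on $a_2 \triangleleft a_1 \triangleleft b$ (or its reverse) voter $v_2$ fails symmetrically. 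So no ``careful local rearrangement'' can succeed in general, and the step you flagged as the main difficulty cannot be carried out as an implication between the two domain properties. The paper does not derive this property either; it \emph{reads} the hypothesis ``single-peaked and party-aligned single-peaked'' as asserting that the common single-peaked axis is itself compatible with the party axis (the ``i.e.''\ at the start of its proof). Under that reading your step 1 is part of the hypothesis and needs no proof, and the rest of your argument goes through; under the weaker reading in which the two properties are witnessed by unrelated axes, your proof as structured does not close, and the example above shows the missing lemma cannot be rescued.
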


\begin{proof}
    Let $(\succ_v)_{v \in V}$ be a preference profile that is both single-peaked and party-aligned single-peaked, i.e., each vote is single-peaked with respect to the same axis~$\triangleleft$ that is compatible with some party axis. 
    We call a nomination scheme~$\vec{c}$ \emph{centrist} if the leftmost party~$P_L$ nominates its rightmost candidate, and the rightmost party~$P_R$ nominates its leftmost candidate in~$\vec{c}$. 
    If a voter~$v$ does not vote for~$P_L$ in a centrist nomination scheme~$\vec{c}$, then its favourite candidate is to the right of every candidate of~$P_L$, and hence 
    there is no Nash deviation for~$\vec{c}$ by~$P_L$ due to the single-peakedness of~$\succ_v$ w.r.t.~$\triangleleft$. 
    A symmetric argument shows that $\vec{c}$ admits no Nash deviation by~$P_R$ either. 

    We now distinguish between two cases. 
    If there is a candidate~$c \notin P_L \cup P_R$ that is a winner in some centrist nomination scheme~$\vec{c}$, then neither the party containing~$c$, nor any of the parties~$P_L$ and~$P_R$ has a Nash deviation for~$\vec{c}$ which is thus a Nash equilibrium.
    By contrast, if in all centrist nomination schemes, no party other than~$P_L$ and~$P_R$ wins, then any such nomination scheme is a Nash equilibrium, since no party has a Nash deviation.
\end{proof}

\paragraph{Non-Existence of Nash Equilibria.}

Let us show an example proving the tightness of Theorem~\ref{thm:NE-existence-three-parties} in the sense that party-aligned single-peakedness alone does not guarantee a Nash equilibrium, even if the number of parties is only two.
\begin{theorem}
    \label{thm:No-NE-partySP-twoparties}
    There is an election~$E=(C,V)$ 
    containing two size-two parties and three voters
    with a party-aligned single-peaked preference profile  
    that admits no Nash equilibria.
\end{theorem}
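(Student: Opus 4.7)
The plan is to exhibit an explicit instance with two parties $A = \{a_1, a_2\}$ and $B = \{b_1, b_2\}$ together with three voters for which each of the four nomination schemes $(a_i, b_j)$ with $i,j \in \{1,2\}$ admits a Nash deviation. Party-aligned single-peakedness is then automatic from Corollary~\ref{cor:twoparties-always-partySP}, so no party axis needs to be constructed by hand; the entire task reduces to controlling which party wins in each of the four reduced elections.

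First I would observe that with three voters and only two nominees in play, ties are impossible, so each scheme has a unique winning party. Encoding the outcomes as a $2 \times 2$ matrix $M$ with entries in $\{A,B\}$, rows indexed by $A$'s nominee and columns by $B$'s nominee, the scheme $(a_i, b_j)$ fails to be a Nash equilibrium precisely when the losing party can flip the outcome by moving to the adjacent cell in its row (if $A$ loses) or column (if $B$ loses). The anti-diagonal pattern $M = \left(\begin{smallmatrix} A & B \\ B & A \end{smallmatrix}\right)$ has exactly this property in every cell: at $(a_1, b_1)$ and $(a_2, b_2)$, party $B$ is the loser and wins after switching its nominee; symmetrically, party $A$ deviates profitably at $(a_1, b_2)$ and $(a_2, b_1)$. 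So if I can realise this outcome matrix, no scheme is an equilibrium.

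Second, I would realise the matrix by three voters. The required pairwise majorities are $a_1 \succ b_1$, $b_2 \succ a_1$, $b_1 \succ a_2$, and $a_2 \succ b_2$, each supported by at least two of the three voters. These preferences together form the cycle $b_2 \succ a_1 \succ b_1 \succ a_2 \succ b_2$, so no single linear order can satisfy all four; however, dropping any one of them leaves three mutually consistent constraints that uniquely determine a ``near-perfect'' linear order satisfying three out of four. Taking any three of the four near-perfect orderings---for instance $v_1 : b_1 \succ a_2 \succ b_2 \succ a_1$, $v_2 : a_1 \succ b_1 \succ a_2 \succ b_2$, and $v_3 : a_2 \succ b_2 \succ a_1 \succ b_1$---guarantees that each of the four pairwise inequalities is violated by at most one voter and hence supported by at least two, so the four head-to-head tallies reproduce the matrix $M$ as desired.

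The one conceptual step is spotting the cyclic obstruction among the four pairwise majorities needed to realise the anti-diagonal outcome matrix; once it is recognised, the ``omit one constraint, pick three of the four induced orders'' trick immediately yields the voter profile. Everything else---checking that every cell of $M$ admits a deviation, verifying the four pairwise tallies, and arguing party-aligned single-peakedness---is either a routine enumeration or a direct invocation of Corollary~\ref{cor:twoparties-always-partySP}.
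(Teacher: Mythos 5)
Your proposal is correct and takes essentially the same route as the paper: it exhibits the anti-diagonal $2\times 2$ outcome matrix via an explicit three-voter profile (indeed, your three votes are exactly the paper's votes up to renaming the voters), invokes Corollary~\ref{cor:twoparties-always-partySP} for party-aligned single-peakedness, and checks that every cell admits a Nash deviation. The only addition is your systematic derivation of the profile from the Condorcet-cycle obstruction, which is a nice way to motivate the construction but does not change the argument.
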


\begin{proof}
    Consider the election~$E=(C,V)$ with two parties $A=\{a_1,a_2\}$ and $B=\{b_1,b_2\}$ and voters $V=\{v_1,v_2,v_3\}$ whose preferences are 
    \begin{align*}
        v_1 &: a_1 \succ b_1 \succ a_2 \succ b_2; \\
        v_2 &: b_1 \succ a_2 \succ b_2 \succ a_1; \\
        v_3 &: a_2 \succ b_2 \succ a_1 \succ b_1. 
    \end{align*}
    By Corollary~\ref{cor:twoparties-always-partySP}, this profile is party-aligned single-peaked. To see that there is no Nash equilibrium, it suffices to consider each possible nomination \scheme\ and the number of voters supporting the two parties in each of them, as depicted by Figure~\ref{fig:NonEx-twoparties}.
    It is easy to see that each of the four possible nomination \schemes\ admits a Nash deviation.
    \begin{figure}[t]
	\begin{center}
        \begin{tikzpicture}[scale=0.5]
\node (RT) at (-3,1) [label=left:$a_1$] {};
\node (RB) at (-3,-1) [label=left:$a_2$] {};
\node (CL) at (-1.5,2) [label=above:$b_1$] {};
\node (CR) at (1.5,2) [label=above:$b_2$] {};
\node (RTL) at (-2,0.5) {$v_1,v_3$}; 
\node (CTL) at (-0.9,1.4) {$v_2$}; 
\node (RBL) at (-2,-1.4) {$v_3$};
\node (CBL) at (-1,-0.6) {$v_1,v_2$};
\node (RTR) at (0.9,0.6) {$v_1$};
\node (CTR) at (2,1.4) {$v_2,v_3$};
\node (RBRtop) at (0.6,-0.85) {$v_1,$};
\node (RBRbottom) at (1,-1.5) {$v_2,v_3$};
\node (CBR) at (2,-0.6) {--};
\draw[-,very thick] (-3,-2) to (3,-2);
\draw[-,very thick] (-3,0) to (3,0);
\draw[-,very thick] (-3,2) to (3,2);
\draw[-,very thick] (-3,-2) to (-3,2);
\draw[-,very thick] (0,-2) to (0,2);
\draw[-,very thick] (3,-2) to (3,2);
\draw[-,very thin] (-3,2) to (0,0);
\draw[-,very thin] (0,0) to (3,-2);
\draw[-,very thin] (-3,0) to (0,-2);
\draw[-,very thin] (0,2) to (3,0);
\end{tikzpicture}
\hspace{4em}
        \scalebox{1}{\nfgame{$a_1$ $a_2$ $b_1$ $b_2$ $1$ $0$ $0$ $1$ $0$ $1$ $1$ $0$}}
		\caption{Illustration for Theorem~\ref{thm:No-NE-partySP-twoparties}. The left figure shows the voters supporting candidates of~$A$ and~$B$ in all nomination \schemes. On the right, the corresponding winners and non-winners are indicated by ``1'' and ``0'', respectively.
        }\label{fig:NonEx-twoparties}
        \end{center}
	\end{figure}
\end{proof}
    
We proceed to show the tightness of Theorem~\ref{thm:NE-existence-three-parties} in terms of the number of parties.

\begin{theorem}\label{thm:NoNE}
There exists a 1-D Euclidean election~$\E$  with four parties and maximum party size~$2$ with both single-peaked and party-aligned single-peaked preference profiles that does not admit any Nash equilibria.
\end{theorem}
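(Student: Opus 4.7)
The plan is to exhibit an explicit 1-D Euclidean election $\mathcal{E}=(C,V)$ with four parties of size at most two, and then verify by direct case analysis that every nomination scheme admits a profitable unilateral deviation. First I would place the candidates of $P_1,P_2,P_3,P_4$ in four pairwise-disjoint short intervals along the real line, in that order. Since the preferences are then 1-D Euclidean, they are automatically single-peaked with respect to the real line; and since the candidates of each party form a contiguous block on the line, the ordering of the four intervals is a party axis witnessing party-aligned single-peakedness. Thus the structural requirements of the theorem are automatic and only the absence of Nash equilibria remains to be established.

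To arrange that no scheme is a Nash equilibrium, I would place voters of two kinds: \emph{anchor} voters, clustered inside or just outside the interval of each individual party, which fix a baseline Plurality score for that party; and \emph{swing} voters, placed in the narrow transition zones around the midpoints between adjacent parties, whose vote flips when one of the two neighbouring parties switches to its other candidate. By tuning the numbers of swing voters in each transition zone and the sizes of the anchor groups, I can control the Plurality winner in each of the up to $2^4$ nomination schemes, and in particular engineer a cycle of unilateral deviations: every scheme is won by some party~$P$ for which some other party~$Q$ can switch its nominee and itself become the new winner.

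The main obstacle is the strong monotonicity forced by 1-D Euclidean geometry: moving a party's nominee toward the centre can only increase that party's Plurality score, and simultaneously moving a neighbouring party's nominee farther from the centre can only increase it further. Consequently a naive matching-pennies pattern between just two parties cannot be realised in 1-D Euclidean elections (if party~$P$ wins at one scheme, then $P$ still wins at any scheme where $P$'s own nominee moves closer to the centre while the neighbour's moves away). The cycle of profitable deviations therefore has to involve at least three of the four parties, exploiting the fact that swapping a distant party's nominee changes the identity of the Plurality winner indirectly by perturbing the scores of intermediate parties. Once the construction is fixed, the verification is mechanical: enumerate every scheme, compute the Plurality scores from the explicit voter positions---taking care to break all ties by small perturbations so that every winner is strict---and, for each scheme, point to a party that turns from a loser into a winner by switching to its alternative nominee.
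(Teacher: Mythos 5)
There is a genuine gap, on two counts. First, for an existence theorem the witness \emph{is} the proof: your plan defers the entire construction to ``tuning the numbers of swing voters'', so nothing is actually verified. Second, and more importantly, the structural claim you use to guide that tuning is false. You assert that in a 1-D Euclidean election ``moving a party's nominee toward the centre can only increase that party's Plurality score'', and you conclude that a matching-pennies cycle between just two parties is impossible, so the deviation cycle must involve at least three parties. This monotonicity holds only when there are exactly two parties (each party then steals voters from its single neighbour without losing any); with four parties it fails, because a party that moves its nominee inward abandons its extreme voters to the party on its outer flank. The intervals a nominee captures when facing fixed neighbours on both sides all have the same length and merely shift along the line, so neither candidate of a party dominates the other, and which one wins depends entirely on where the voters sit.

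The paper exploits exactly this: it places two singleton parties at the extremes (positions $0$ and $12$) as pure vote sinks, and two size-two parties in the middle, $P_1=\{p_1,p_1'\}$ at positions $2,5$ and $P_2=\{p_2,p_2'\}$ at $7,11$, with voter blocks of sizes $5,2,6,2,7$ at positions $2,3,5,8.9,11$. When $P_1$ moves from $2$ to $5$ it loses the five voters at $2$ to the left sink; when $P_2$ moves from $11$ to $7$ it loses the seven voters at $11$ to the right sink. The resulting $2\times 2$ win/lose table between $P_1$ and $P_2$ is precisely matching pennies (scores $(7,8),(8,2),(13,9),(8,9)$ across the four schemes, with the winner alternating), and the singletons cannot deviate at all. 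So the ``naive'' two-party cycle you ruled out is exactly the construction that works, and the four-party requirement comes not from needing three deviating parties but from needing the two extremal sinks --- consistent with the paper's Theorem~\ref{thm:NE-existence-three-parties}, which shows three parties never suffice for a counterexample in this domain. Your proposed three-party indirect cycle is not shown to exist, and the premise that forces you toward it is incorrect.
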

\begin{proof}
	
		Let us construct an instance with four parties that is both single-peaked and party-aligned single-peaked. We do this by constructing the following 1-D Euclidean election.
        
        Let the \emph{distribution function $\phi$} of voters assign to each point~$x$ on the real line the number $\phi(x)$ of voters located at~$x$. We set the distribution function~$\phi$ of our instance such that $\phi(2)=5$, $\phi(3)=2$, $\phi(5)=6$, $\phi(8.9)=2$, $\phi(11)=7$, and $\phi(x)=0$ for all other points~$x$ on the real line.  We define the set of parties $\P=\{P_1,P_2,P_3,P_4 \}$. 
        Let $P_1=\{p_1,p'_1\}$, $P_2=\{p_2,p'_2\}$, $P_3=\{p_3\}$ and~$P_4=\{p_4\}$. The positions of the candidates are 
        as shown in Figure~\ref{fig:nonEx-line}.

        \begin{figure}
            \centering
           \scalebox{.8}{ \includegraphics[width=0.7\columnwidth]{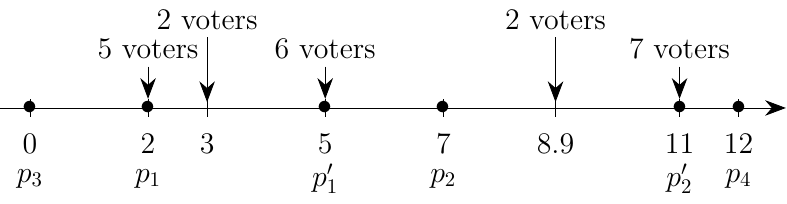}}
            \caption{Location of voters and candidates on the real line in the election constructed in Theorem~\ref{thm:NoNE}. In all figures, voters are indicated by arrows, and candidates by black circles.}
            \label{fig:nonEx-line}
        \end{figure}
        
        It is straightforward to verify that all preference profiles in this election are party-aligned single-peaked (note that voters' preferences can be formulated as strict total orders) and also single-peaked. Furthermore, we observe that parties $P_3$ and $P_4$ nominate the same candidate in each nomination \scheme, and that their candidates cannot win in any nomination \scheme. Figure~\ref{fig:W-LNoNE} shows the number of votes obtained by parties~$P_1$ and~$P_2$, as well as the winner of the resulting election, depending on the nominated candidates. 
		In each nomination \scheme, either party~$P_1$ or party~$P_2$ has a Nash deviation, so the instance admits no Nash equilibria.
		\begin{figure}[t]
           \begin{center}
			\scalebox{1}{\nfgame{$p_1$ $p'_1$ $p_2$ $p'_2$ $7$ $8$ $8$ $2$ $13$ $9$ $8$ $9$}}
			~
            \hspace{4em}
			\scalebox{1}{\nfgame{$p_1$ $p'_1$ $p_2$ $p'_2$ $0$ $1$ $1$ $0$ $1$ $0$ $0$ $1$}}
            \end{center}
			\caption{The left figure shows the numbers of voters supporting candidates of $P_1$ and $P_2$ in all possible nomination \schemes. On the right, the corresponding winners and non-winners are indicated by ``$1$'' and~``$0$'', respectively.
            }\label{fig:W-LNoNE}
		\end{figure}
%
%
\end{proof}

\section{Finding a Nash Equilibrium}
\label{sec:findNashEq}

Let us now turn our attention to the problem of finding a Nash equilibrium whenever it exists. 
As we have already shown in Section~\ref{sec:equilibrium-existence}, a Nash equilibrium may not exist even in very restricted domains involving just four parties. Formally,
we define the corresponding 
problem as follows:

\medskip
\noindent
\begin{minipage}{0.88\columnwidth}
\fbox{
    \parbox{1.08\columnwidth}{
    {\Nashexistence} 
    \begin{description}
        \item[{\bf Input:}] 
        A set~$C$ of candidates partitioned into a set~$\mathcal{P}$ of parties
         and a set~$V$ of voters with preference profile ${(\succ_v)_{v \in V}}$ over the set~$C$. 
         \item[{\bf Question:}] 
         Does $(C,\P,V,(\succ_v)_{v \in V})$ admit a Nash equilibrium?
    \end{description}
    }
    }
\end{minipage}
\medskip

It turns out that \Nashexistence\ is already $\NP$-hard for single-peaked preference profiles, as we show in Theorem~\ref{thm:find-Nash-SP-NPhard}. 
In stark contrast with this, we give a polynomial-time algorithm that not only finds a Nash equilibrium, but also decides if a given party can win in some Nash equilibrium.  We 
solve the following problem.

\medskip
\noindent
\begin{minipage}{0.88\columnwidth}
\fbox{
    \parbox{1.08\columnwidth}{
    {\findNashEq} 
    \begin{description}
        \item[{\bf Input:}] 
         A set~$C$ of candidates partitioned into a set~$\mathcal{P}$ of parties, a distinguished party~$P \in \P$, and a set~$V$ of voters with preference profile ${(\succ_v)_{v \in V}}$ over the set~$C$. 
         \item[{\bf Question:}] 
         Is there a nomination \scheme\ that is a Nash equilibrium for the instance and where $P$ is a winner in the resulting election?
    \end{description}
    }
    }
\end{minipage}
\medskip

\begin{restatable}[\linkproof{sec:proof-of-thmfindNash}]{theorem}{thmfindNashSPNPhard}    
    \label{thm:find-Nash-SP-NPhard}
    \Nashexistence\ and \findNashEq\ are $\NP$-hard for 1-D Euclidean, and hence, for single-peaked preference profiles. 
\end{restatable}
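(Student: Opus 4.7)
The plan is to establish NP-hardness of \findNashEq\ by a polynomial-time reduction from a well-known NP-complete problem; hardness of \Nashexistence\ would then be obtained either as a side effect of the same reduction, or by adding a tiebreaker gadget so that a Nash equilibrium exists in the constructed instance if and only if the distinguished party can win in one. A natural source problem is 3-SAT (or a variant such as Exact Cover by 3-Sets), since it naturally produces binary per-variable choices and local clause constraints, which lend themselves to encoding via two-candidate parties and Plurality-score thresholds.

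Given an instance with variables $x_1, \ldots, x_n$ and clauses $C_1, \ldots, C_m$, I would build a 1-D Euclidean election as follows. For each variable $x_i$, I introduce a \emph{variable party} $P_i$ with two candidates placed at nearby positions on the real line; the nomination corresponds to assigning true or false to $x_i$. I would add one distinguished party $P^{\star}$ whose candidate sits at a carefully chosen ``central'' point, and possibly a constant number of \emph{dummy parties} (with a single candidate each) used as separators between the variable parties on the line. For each clause $C_j$, I would place a dedicated cluster of voters at positions tuned so that the Plurality scores on the cluster react to the nominations of the three variable parties involved in $C_j$: when the clause is satisfied, the cluster's votes favor $P^{\star}$ (or a ``safe'' configuration with no deviation); when it is violated, they tip the balance so that some variable party has a profitable Nash deviation. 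Satisfiability of the formula then corresponds exactly to the existence of a nomination scheme in which $P^{\star}$ wins and no party can deviate profitably.

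The correctness proof splits naturally into two directions. In the forward direction, given a satisfying assignment, the obvious ``assignment-respecting'' nomination scheme must be shown to be a Nash equilibrium with $P^{\star}$ winning, which amounts to verifying that no variable party can change its own nominee to overtake $P^{\star}$. In the reverse direction, for any Nash equilibrium in which $P^{\star}$ wins, the nominations of the variable parties must induce a satisfying assignment: otherwise, the cluster gadget of some violated clause would yield a Nash deviation, contradicting equilibrium. Ensuring the instance is truly 1-D Euclidean requires exhibiting concrete coordinates for all voters and candidates and checking that the induced preferences are strict—this is where careful perturbation of positions (rather than just single-peakedness on an abstract axis) is needed.

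The main obstacle will be the rigidity of the 1-D Euclidean geometry. Unlike abstract single-peaked profiles, where the axis constrains only pairwise comparisons, the Euclidean distances must simultaneously encode every clause gadget and every deviation computation with the correct numerical margins. Getting the score arithmetic to work across all $2^n$ nomination schemes—in particular, ensuring that spurious nominations (not arising from the intended assignment structure) never accidentally become equilibria—will require careful bookkeeping of cluster sizes and separator positions. A clean way to control this is to place each clause gadget in its own narrow window on the line, separated by dummy parties whose nominee always wins locally, so that the gadgets' interactions with the rest of the election are strictly additive and can be analyzed independently.
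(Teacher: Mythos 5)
Your proposal has a genuine gap, and it is exactly at the point you flag as ``the main obstacle'' without resolving it. In a 1-D Euclidean election every voter votes for the \emph{nearest} nominee, so a cluster of voters placed in ``its own narrow window on the line'' can only be sensitive to nominees located next to that window. For your clause gadget to react to the nominations of the three variable parties of $C_j$, all three parties would need candidates adjacent to $C_j$'s window; but each variable occurs in several clauses, and you give each variable party only two candidates ``at nearby positions''. You would therefore need to embed the entire clause--variable incidence structure into an interval ordering of the line, which is impossible in general. The paper's reduction (from \textsc{Exact 3-Cover}) avoids this precisely by \emph{not} encoding the choice as two nearby candidates: each element $u$ gets a party $A_u=\{a_u^j : u \in S_j\}$ with one candidate \emph{per occurrence}, each placed inside the segment of the corresponding set $S_j$, so the party's strategic decision is \emph{which local gadget to enter}, not a left/right toggle. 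This is the key idea your construction is missing.

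A second, related gap concerns the source of non-existence of equilibria. You want violated clauses to hand a profitable deviation to a variable party, and satisfied clauses to yield stability; but a variable party's deviation flips its own truth value, so you must simultaneously guarantee that (i) under a satisfying assignment no variable party can become a winner by flipping, and (ii) under any assignment violating some clause, a specific variable party is currently losing and strictly gains by flipping. Controlling both directions through raw Plurality scores across all nomination schemes is not something your sketch provides a mechanism for. The paper sidesteps this entirely: the element parties $A_u$ are designed so that they can obtain at most one vote and hence \emph{never} win and never have a Nash deviation; the instability is instead generated by dedicated pairs of parties $B_i, C_i$ playing a matching-pennies subgame in auxiliary segments, whose only escape is for $C_i$ to win alone in some triple segment --- which is possible for all $i$ simultaneously iff an exact cover exists. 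Your final step (adding a far-away distinguished party to transfer hardness from \Nashexistence\ to \findNashEq) does match the paper, but the core gadgetry needs to be rebuilt along these lines before the reduction can be made to work.
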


We move on to present our main result:
\begin{theorem}
    \label{thm:find-Nash-Eq-poly}
    \Nashexistence\ and \findNashEq\ are polynomial-time solvable if the preference profile is party-aligned single-peaked.
\end{theorem}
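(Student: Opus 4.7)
The plan is to exploit the party axis structure supplied by party-aligned single-peakedness and solve \findNashEq\ by dynamic programming along the party axis; \Nashexistence\ then follows by trying each party in the role of the distinguished one.

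First I would apply the recognition algorithm of \Cref{thm:recognition-poly} to obtain a witnessing party axis $P_1 \triangleleft_\P \cdots \triangleleft_\P P_m$. The key structural fact I would establish is a locality lemma: for any nomination scheme $(c_1,\dots,c_m)$ and any voter $v$ whose top-ranked candidate lies in party $P_j$, voter $v$ casts their ballot for one of $c_{j-1}, c_j, c_{j+1}$. Indeed, since $v$'s perceived axis is compatible with the party axis, the candidates of $P_{j-1}$ form a block entirely to the left of $P_j$'s block and those of $P_{j+1}$ a block entirely to its right; thus $c_{j-1}$ is the nominee closest to $v$'s peak from the left among nominees outside $P_j$, $c_{j+1}$ the closest from the right, and $c_j$ lies inside $P_j$'s block. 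Every other nominee is strictly further from $v$'s peak than one of these three, so single-peakedness w.r.t.~$\triangleleft_v$ rules it out. Two consequences I would record: the vote count of any nominee $c_i$ depends only on the triple $(c_{i-1}, c_i, c_{i+1})$, and a swap $c_i \to c'_i$ changes only the vote counts of candidates in $\{c_{i-1}, c_i, c_{i+1}, c'_i\}$.

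With this locality in hand, for each candidate $p$ of the distinguished party $P = P_{i_P}$ I would search for a Nash equilibrium in which $p$ wins by a left-to-right DP on the party axis. The state would be $(j, c_{j-2}, c_{j-1}, c_j, M, \sigma)$, where $c_{j-2}, c_{j-1}, c_j$ are the three most recently chosen nominees, $M$ records the running maximum vote count over nominees whose final count is already determined, and $\sigma$ stores $p$'s own vote count once it becomes known (after $c_{i_P+1}$ has been placed). A transition picks $c_{j+1} \in P_{j+1}$ (forced to equal $p$ if $j+1 = i_P$), computes the final count of $c_{j-1}$ from $(c_{j-2}, c_{j-1}, c_j)$, updates $M$, and verifies that party $P_{j-2}$ admits no Nash deviation: by the locality lemma this amounts to checking, for each $c'_{j-2} \in P_{j-2}$, that the post-swap count of $c'_{j-2}$ (a function of $c_{j-3}, c'_{j-2}, c_{j-1}$) is strictly smaller than the post-swap maximum, which can be obtained from $M$ together with the at-most-three locally recomputed counts. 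An accepting terminal state requires $\sigma = M$ and no flagged deviation.

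The state space has size $O(|\mathcal{P}| \cdot |C|^3 \cdot |V|^2)$, and each transition is processed in polynomial time after a preprocessing pass that tabulates, for every voter and every pair of candidates from consecutive parties, which of the two the voter prefers; the overall algorithm is therefore polynomial. The hard part will be deviations by parties adjacent to $P$: when $|i - i_P| \leq 1$, a swap of $c_i$ can shift $p$'s own vote count, so the threshold $\sigma$ against which alternative candidates in $P_i$ must be compared is no longer fixed. I would handle this by treating the window $P_{i_P-2},\dots,P_{i_P+2}$ as a ``boundary block'': for each of the $O(|C|^5)$ joint choices of its five nominees I would run the left and right half-axis DPs independently and combine their outputs. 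Since the number of such local configurations is polynomial, this causes no super-polynomial blow-up.
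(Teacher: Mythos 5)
Your overall strategy---recover the party axis and run a dynamic program along it, exploiting the fact that each voter's ballot is determined by a short window of nominees---is the same skeleton as the paper's algorithm. However, the mechanism you use to certify the absence of Nash deviations has a genuine gap. Whether a swapped-in candidate $c'_i$ ``is a winner'' after the swap is a \emph{global} condition: it compares $c'_i$'s post-swap score to the post-swap maximum over all $k$ nominees. At the moment your transition checks deviations by party $P_{j-2}$, the nominees of $P_{j+1},\dots,P_m$ have not been chosen, so that maximum is simply not determined yet; and the running maximum $M$ you carry cannot stand in for it, in either direction. Worse, $M$ already absorbs the pre-swap scores of $c_{j-3}$ and $c_{j-1}$, whose scores change under the swap, and a maximum cannot be ``un-maxed'': if $M$ was attained only by $c_{j-1}$ and the swap lowers $c_{j-1}$'s score, the quantity you actually need (the maximum over the unaffected nominees) is strictly smaller than $M$ and unrecoverable from your state. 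As described, your check can both reject valid equilibria and accept schemes with deviations. The paper's resolution is precisely the ingredient you are missing: it guesses the equilibrium winning score $s^\star$ of the distinguished party up front (an extra factor of $|V|$), and then replaces ``is a winner'' everywhere by the \emph{local} threshold test ``obtains at least $s^\star$ votes''---this is what conditions ($\alpha$) and ($\beta$) in the definition of left-/right-feasible partial nomination schemes encode, and it is what makes the DP over viable score pairs (Lemmas~\ref{lem:NE-viable-scorepair-i=1} and~\ref{lem:NE-viable-scorepair-recursion}) sound without ever needing a running maximum. Your accepting condition $\sigma=M$ at the terminal state is the right idea applied one step too late.

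A secondary issue: your locality lemma (each voter peaked in $P_j$ votes for one of $c_{j-1},c_j,c_{j+1}$) is true but too weak to support your consequence that a swap $c_i\to c'_i$ perturbs only the scores of $\{c_{i-1},c'_i,c_{i+1}\}$. Under three-party locality alone, a voter peaked in $P_{i-1}$ could a priori vote for $c_{i-2}$ in one scheme and for $c_i$ in another, so swapping $c_i$ could change the score of $c_{i-2}$ as well, spreading the effect of a swap over a five-candidate window and breaking the locality of your deviation check. The paper's Lemma~\ref{lem:two-possible-votes} proves the sharper statement that each voter can ever vote for at most \emph{two fixed adjacent} parties (determined by which extremal candidate of their top party they prefer), and this is what actually confines the effect of a swap to the two neighbouring nominees. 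Your argument needs this stronger form, not the three-party version.
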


We present a polynomial-time algorithm for \findNashEq. 
Note that 
applying this algorithm for each party as the distinguished one results in a polynomial-time algorithm for \Nashexistence.

Let $(C,\P,P,V,(\prec_{v})_{v \in V})$ be our input instance for the \findNashEq\ problem.
Our algorithm uses dynamic programming and relies heavily on the structure of the preference domain. 

We start by computing a party axis~$\triangleleft_{\P}$ with respect to which the input profile $(\prec_{v})_{v \in V}$ is party-aligned single-peaked; let $P_1,\dots,P_k$ be the parties in~$\P$ ordered according to~$\triangleleft_{\P}$, and let $P=P_{\kappa}$ be the distinguished party among whose candidates we are searching for a candidate that can become a winner in some Nash equilibrium.

We first prove the following lemma.
\begin{restatable}[\linkproof{sec:proof-of-twopossvotes}]{lemma}{lemtwopossiblevotes}
\label{lem:two-possible-votes}
Suppose that the profile $(\succ_v)_{v \in V}$ is party-aligned single-peaked.
Then for each voter $v \in V$, there exist at most two parties for which $v$ might vote for in a reduced election.
Moreover these two parties must be adjacent along the party axis, and can be found in $O(|C|)$ time.
\end{restatable}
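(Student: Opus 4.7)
The plan is to use party-aligned single-peakedness to sharply constrain which parties~$v$ might vote for. Let $\triangleleft_v$ be $v$'s perceived axis (compatible with the party axis $\triangleleft_\P = (P_1,\dots,P_k)$) on which $\succ_v$ is single-peaked, let $p$ be $v$'s top candidate overall, and let $P_j$ be the party containing~$p$. For each party~$P_i$, write $h_i$ and $l_i$ for $v$'s most- and least-preferred member of~$P_i$.

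First I would establish a simple characterization: $v$ casts a vote for party~$P_i$ in \emph{some} reduced election if and only if $h_i \succ_v l_{i'}$ for every $i' \neq i$. Indeed, the nomination scheme that most favours $P_i$ from $v$'s perspective is the one in which $P_i$ nominates its $v$-best candidate $h_i$ and every other party nominates its $v$-worst candidate. Since $h_j = p$ dominates everything, $P_j$ always satisfies the characterization.

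Next, I would rule out any $P_i$ with $|i-j| \geq 2$. Take $i < j-1$, the other case being symmetric. Compatibility of~$\triangleleft_v$ with~$\triangleleft_\P$ places every candidate of $P_i$ strictly left of every candidate of $P_{i+1}$ on~$\triangleleft_v$, and both lie strictly left of~$p$ (as $P_{i+1} \triangleleft_\P P_j$). Hence $h_i \triangleleft_v l_{i+1} \triangleleft_v p$, and because single-peakedness makes $\succ_v$ increase along the left slope of the peak, we obtain $l_{i+1} \succ_v h_i$, violating the characterization for~$P_i$.

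The heart of the argument, which I expect to be the most delicate step, is to exclude $v$ voting for both~$P_{j-1}$ and~$P_{j+1}$. I would analyse the position of~$l_j$ on~$\triangleleft_v$ relative to~$p$. If $|P_j|=1$ then $l_j = p$, and neither $h_{j-1} \succ_v p$ nor $h_{j+1} \succ_v p$ can hold. Otherwise $l_j$ lies strictly on one side of~$p$; suppose $l_j \triangleleft_v p$. Compatibility then yields $h_{j-1} \triangleleft_v l_j \triangleleft_v p$, and single-peakedness on the left slope forces $l_j \succ_v h_{j-1}$, so~$P_{j-1}$ is ruled out; the symmetric case $p \triangleleft_v l_j$ rules out~$P_{j+1}$. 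Thus $v$ can vote for at most $P_j$ together with one neighbour of~$P_j$ on~$\triangleleft_\P$, and these two parties are adjacent. Finally, the $O(|C|)$ running time follows from a single pass through~$\succ_v$ that locates~$p$, identifies~$P_j$, and records all $h_i$ and $l_i$, after which the characterization for the (at most two) neighbours of~$P_j$ is verified in constant time per comparison via precomputed ranks.
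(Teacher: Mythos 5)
Your proof is correct and follows essentially the same route as the paper's: both arguments hinge on locating, along the perceived axis $\triangleleft_v$, the worst-ranked member of the party containing $v$'s peak, and observing that every party on that side is dominated by the top party's nominee while every party beyond the adjacent party on the other side is dominated by that adjacent party's nominee. Your reformulation via the criterion ``$h_i \succ_v l_{i'}$ for all $i' \neq i$'' is a clean (and slightly sharper, since it is an exact characterization) packaging of the same idea, and your complexity analysis matches the paper's single pass through $\succ_v$.
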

\paragraph{Partitioning the Voter Set.}
We now introduce the following sets of voters.
First, for each $i \in [k]$, let $V^{P_i}$ denote the set of those voters in~$V$ who always vote for the nominee of party~$P_i$ irrespective of the nomination \scheme. Note that $v \in V^{P_i}$ if and only if the top $|P_i|$ candidates in $v$'s preference list are exactly the candidates belonging to~$P_i$.

Second, for each $i \in [k-1]$, let $V^{(P_i,P_{i+1})}$ denote the set of voters in~$V \setminus (V^{P_i} \cup V^{P_{i+1}})$ who always vote either for the nominee of~$P_i$ or for the nominee of~$P_{i+1}$. 
By Lemma~\ref{lem:two-possible-votes}, the sets $V^{P_i}$ for $i \in [k]$ together with the sets $V^{(P_i,P_{i+1})}$ for $i \in [k-1]$ form a partitioning of the voter set~$V$, and  this partitioning can be computed in $O(|V| \cdot |C|)$ time.

Third, let us define the following sets of voters, crucial for our dynamic programming approach:
\allowdisplaybreaks
\begin{align}
\label{equ:def-Vleqi}
V_{\leq i}&=\left(\bigcup_{j=1}^i V^{P_j}  \right) \cup \left( \bigcup_{j=1}^{i-1} V^{(P_j,P_{j+1})}\right), \quad \text{ and} \\
V_{\geq i}&=\left(\bigcup_{j=i}^k V^{P_j}  \right) \cup \left( \bigcup_{j=i}^{k-1} V^{(P_j,P_{j+1})}\right).
\label{equ:def-Vgeqi}
\end{align}
Note that $V_{\leq 1} \subseteq V_{\leq 2} \subseteq \dots \subseteq V_{\leq k}=V$, and observe also that voters in~$V_{\leq i}$ may only vote for nominees in~$P_1 \cup \dots \cup P_i$ in any reduced election.
Similarly, $V=V_{\geq 1} \supseteq V_{\geq 2} \supseteq \dots \supseteq V_{\geq k}$, and voters in~$V_{\geq i}$ may only vote for nominees in~$P_i \cup \dots \cup P_k$ in any reduced election.


Assuming some hypothetical nomination \scheme\ that is a Nash equilibrium and in which $P_\kappa$ is a winner, we next guess the score~$s^\star$ obtained by the nominee of~$P_\kappa$ in such a Nash equilibrium; 
note that $s^\star \leq |V|$, so there are at most~$|V|$ possible values $s^\star$ can take. Henceforth, we treat $s^\star$ as fixed.

\paragraph{Feasible Partial Nomination Schemes.} 
For some ${i \in [k]}$ at most $\kappa$, a \emph{left-feasible partial nomination \scheme\ for~$i$} is a tuple $(c_1,\dots,c_i)$ where $c_j \in P_j$ for each $j \in [i]$, and

\smallskip
\begin{minipage}{0.96\columnwidth}
\begin{itemize}
    \item[($\alpha$)] for each $j \in [i-1]$, $c_j$ obtains at most~$s^\star$ votes in every nomination \scheme\ $(c_1,\dots,c_i,c'_{i+1},\dots,c'_k)$,\footnotemark and
    \item[($\beta$)]  for each $j \in [i-1]$,  
    if $c_j$ obtains less than~$s^\star$ votes in some nomination \scheme\ $(c_1,\dots,c_i,c'_{i+1},\dots,c'_k)$, then no candidate $c'_j \in P_j \setminus \{c_j\}$ obtains at least~$s^\star$ votes in $(c_1,\dots,c_{j-1},c'_j,c_{j+1}, \dots,c_i,c'_{i+1},\dots,c'_k)$.
\end{itemize}
\end{minipage}
\smallskip

\footnotetext{Note that the nominees $c'_{i+1},\dots,c'_k$ are irrelevant here as they do not affect how voters in~$V_{\leq i}$ vote.}

Similarly, for each index~$i \in [k]$ at least~$\kappa$, a \emph{right-feasible partial nomination \scheme\ for~$i$} is a tuple $(c_1,\dots,c_i)$ where $c_j \in P_j$ for each $j \in [i]$, and moreover, %

\smallskip
\begin{minipage}{0.95\columnwidth}
\begin{itemize}
    \item[($\alpha'$)] for each $j \in [k]\setminus [i]$, $c_j$ obtains at most~$s^\star$ votes in every nomination \scheme\ $(c'_1,\dots,c'_{i-1},c_i,\dots,c_k)$, and
    \item[($\beta'$)]  for each $j \in [k]\setminus [i]$,  
    if $c_j$ obtains less than~$s^\star$ votes in some nomination \scheme\ $(c'_1,\dots,c'_{i-1},c_i,\dots,c_k)$, then no candidate $c'_j \in P_j \setminus \{c_j\}$ obtains at least~$s^\star$ votes in $(c'_1,\dots,c'_{i-1},c_i,\dots,c_{j-1},c'_j,c_{j+1}, \dots,c_k)$.
\end{itemize}
\end{minipage}
\smallskip

\paragraph{Viable Scores and Score Pairs.}
We are now ready to define the central notions necessary for our algorithm. 
For some $i \in [k]$ and candidate~$c_i \in P_i$, an integer $s_i$ is a \emph{left-} (or \emph{right-}) \emph{viable score for~$c_i$} if there is a left- (or right-) feasible partial nomination \scheme\ where $c_i$ obtains exactly~$s_i$ votes from~$V_{\leq i}$ (from $V_{\geq i}$, respectively). 

Next, we define the concept of viability for \emph{pairs of integers}.
Given an index~$i \in \{2,\dots,k\}$, a pair $(s_{i-1},s_i)$ of integers is \emph{left-viable for a pair $(c_{i-1},c_{i}) \in P_{i-1} \times P_{i}$ of candidates} if there is a left-feasible partial nomination \scheme\ where $c_j$ obtains exactly~$s_j$ votes from~$V_{\leq i}$ for both $j \in \{i-1,i\}$. 
Notice that, consequently, some score~$s_i$ is left-viable for candidate~$c_i \in P_i$ 
if and only if there exists some $s_{i-1} \in \{0,1,\dots,s^\star\}$ and candidate~$c_{i-1} \in P_{i-1}$ such that$(s_{i-1},s_i)$ is a left-viable pair for~$(c_{i-1},c_i)$. 
The symmetric concept of \emph{right-viability} is defined analogously. 

Lemma~\ref{lem:viable-score-vs-equilibria} shows the importance of these notions.

\begin{restatable}[\linkproof{sec:proof-of-lemviablescores}]{lemma}{lemviablescorevsequilibria}
    \label{lem:viable-score-vs-equilibria}
    Given a candidate~$c_\kappa \in P_\kappa$, the following statements are equivalent: 

\smallskip
\begin{minipage}{0.96\columnwidth}
    \begin{itemize}
        \item[(a)] there exists a nomination \scheme\ 
        that is a Nash equilibrium and in which $c_\kappa$ is a winner with score~$s^\star$;
        \item[(b)] there exist non-negative  integers $s_l$ and~$s_r$ satisfying $s_l+s_r-|V^{P_\kappa}|=s^\star$ such that 
        $s_l$ is a left-viable score for~$c_\kappa$
        and $s_r$ is a right-viable score for~$c_\kappa$.
    \end{itemize}
\end{minipage}    
\end{restatable}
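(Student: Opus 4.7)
My plan is to establish the two implications via the decomposition of the voter set $V = V_{\leq \kappa} \cup V_{\geq \kappa}$ with overlap $V_{\leq \kappa} \cap V_{\geq \kappa} = V^{P_\kappa}$, combined with a vote-conservation principle that, thanks to Lemma~\ref{lem:two-possible-votes}, becomes localized to triples of consecutive parties.

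For the forward direction (a)~$\Rightarrow$~(b), I take a witnessing Nash equilibrium $\sigma = (c_1, \ldots, c_k)$ in which $c_\kappa$ wins at score $s^\star$ and define $s_l$ (respectively $s_r$) to be the number of voters in $V_{\leq \kappa}$ (respectively $V_{\geq \kappa}$) who vote for $c_\kappa$ under $\sigma$. Inclusion-exclusion then yields $s_l + s_r - |V^{P_\kappa}| = s^\star$, because voters in $V^{P_\kappa}$ always vote for $c_\kappa$ while no other voter is double-counted. I then verify that the prefix $(c_1, \ldots, c_\kappa)$ inherited from $\sigma$ is left-feasible: condition $(\alpha)$ is immediate from $c_\kappa$ attaining the maximum score $s^\star$, while for condition $(\beta)$ I argue by contradiction. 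If some deviation $c_j \to c'_j$ with $j \in [\kappa - 1]$ and $c_j < s^\star$ produced a score of at least $s^\star$ for $c'_j$, then the conservation identity among the triple $\{c_{j-1}, c'_j, c_{j+1}\}$ (whose total support is fixed by $|V^{P_j}| + |V^{(P_{j-1},P_j)}| + |V^{(P_j,P_{j+1})}|$ plus untouched contributions from farther parties) together with the fact that the scores of candidates outside this triple do not change would force $c'_j$ to be a winner in the deviated scheme, contradicting the Nash property of $\sigma$. The mirror argument gives right-feasibility of $(c_\kappa, \ldots, c_k)$.

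For the reverse direction (b)~$\Rightarrow$~(a), I stitch the left- and right-feasible partial schemes along $c_\kappa$ into a full scheme $\sigma = (c_1, \ldots, c_k)$. Inclusion-exclusion again gives $c_\kappa$'s score in $\sigma$ as $s_l + s_r - |V^{P_\kappa}| = s^\star$, and $(\alpha)$ together with $(\alpha')$ bound every other candidate's score by $s^\star$, so $c_\kappa$ is a winner. For the Nash property, party $P_\kappa$ is already winning and so has no incentive to deviate; for $j \neq \kappa$, either $c_j$ co-wins, or $c_j < s^\star$ and then $(\beta)$ or $(\beta')$ rule out any deviation $c'_j$ reaching score $s^\star$, and the same conservation reasoning shows $c'_j$ cannot become a winner (the only candidates whose scores move are $c_{j-1}, c'_j, c_{j+1}$, and their sum of scores is unchanged from the original triple $c_{j-1}, c_j, c_{j+1}$).

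The principal obstacle I expect is handling deviations by the parties $P_{\kappa - 1}$ and $P_{\kappa + 1}$ adjacent to $P_\kappa$ on the axis, since there the score of $c_\kappa$ itself shifts under the deviation. In those cases the conservation identity must be applied to the triple containing $c_\kappa$, and one must check that the redistribution of votes among $c_\kappa$, the deviating candidate and their other neighbour is captured faithfully by $(\beta)/(\beta')$. This is precisely where Lemma~\ref{lem:two-possible-votes} becomes indispensable: by confining each voter's support to at most two adjacent parties, it ensures the entire score accounting is local to three consecutive parties, so that the partial-scheme conditions interact cleanly with the global score of $c_\kappa$.
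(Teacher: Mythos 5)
Your overall route is the paper's: the same decomposition of $c_\kappa$'s score via $V_{\leq\kappa}\cap V_{\geq\kappa}=V^{P_\kappa}$, the same witnesses (the prefix and suffix of the equilibrium scheme for (a)$\Rightarrow$(b), the stitched scheme for (b)$\Rightarrow$(a)), and the same appeal to ($\alpha$)/($\alpha'$) for the score bound. The paper's proof is in fact shorter: it derives ($\beta$) directly from the absence of Nash deviations and, conversely, reads the Nash property off ($\beta$), with no conservation argument at all.

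The step you add --- ``the conservation identity \ldots would force $c'_j$ to be a winner'' --- is a non-sequitur, and it is invoked at exactly the point you yourself flag as the principal obstacle. Conservation of the total score of the triple $\{c_{j-1},c'_j,c_{j+1}\}$ says nothing about how that total is distributed. A deviation $c_j\to c'_j$ can simultaneously push voters of $V^{(P_{j-1},P_j)}$ \emph{towards} $c_{j-1}$ and pull voters of $V^{(P_j,P_{j+1})}$ away from $c_{j+1}$; then $c'_j$ can reach $s^\star$ while $c_{j-1}$ rises strictly above $s^\star$, so $c'_j$ need not be a winner and the contradiction with the Nash property that your forward direction relies on does not materialise. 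Symmetrically, in the backward direction with $j=\kappa-1$ (or $j=\kappa+1$), the deviation can drag $c_\kappa$'s score below $s^\star$ while $c'_j$ stays below $s^\star$ yet above every remaining candidate, and the conserved triple total does not exclude this. Lemma~\ref{lem:two-possible-votes} tells you \emph{which} scores can move, but not \emph{how}; what the argument actually requires is a comparison of $c'_j$'s post-deviation score with the post-deviation scores of its neighbours (and of $c_\kappa$), not with $s^\star$ and an invariant sum. The paper's own proof simply asserts the equivalence of ($\beta$) with the absence of Nash deviations and does not attempt the accounting you propose; your elaboration therefore neither reproduces the paper's (terse) argument nor closes the gap it is meant to address.
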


\paragraph{Computing Viable Score Pairs.} 
Let us now provide a dynamic programming method for computing all left-viable score pairs for each candidate pair $(c_i,c_{i+1}) \in P_i \times P_{i+1}$ increasingly for each $i \in [\kappa-1]$; computing right-viable scores can be done in a symmetric manner.
From now on, we may write viability instead of left-viability.

Notice that for a given candidate pair~$(c_1,c_2) \in P_1 \times P_2$, the only viable pair can be $(s_1,s_2)$ where
\begin{align}
\notag
s_1 &=f(c_1|c_2):=|V^{P_1}|+|\{v:v \in V^{(P_1,P_2)},c_1 \succ_v c_2\}|, \\
\label{eq:def-scorepair-init}
s_2 &= |V^{P_2}|+|\{v:v \in V^{(P_1,P_2)},c_2 \succ_v c_1\}|.
\end{align}
However, such a pair~$(s_1,s_2)$ is not necessarily viable for~$(c_1,c_2)$, because it might not lead to a Nash equilibrium. 

\begin{restatable}[\linkproof{sec:proof-of-lemviablescorepair}]{lemma}{lemNEviablescorepairi}
    \label{lem:NE-viable-scorepair-i=1}
    The pair $(s_1,s_2)$ defined as in (\ref{eq:def-scorepair-init}) is viable for a candidate pair $(c_1,c_2) \in P_1 \times P_2$ if and only if \\
    ($\dagger$) $s_1=s^\star$, or \\
    ($\ddagger$) $s_1<s^\star$ and there is no $c'_1 \in P_1$ for which $f(c'_1|c_2)\geq s^\star$.
\end{restatable}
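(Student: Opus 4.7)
The plan is to exploit the fact that, since voters in $V^{P_1}$ always vote for the $P_1$ nominee and there is no party to the left of $P_1$, the only voters who can possibly cast a ballot for $c_1$ (or any other candidate $c'_1 \in P_1$) are those in $V^{P_1} \cup V^{(P_1,P_2)}$. Using Lemma~\ref{lem:two-possible-votes}, a voter in $V^{(P_1,P_2)}$ votes for the $P_1$-nominee exactly when they prefer it to the $P_2$-nominee, so the total score of any candidate $c'_1 \in P_1$ in a nomination scheme $(c'_1, c_2, c'_3, \ldots, c'_k)$ equals $f(c'_1 \mid c_2)$; crucially, this is \emph{independent} of $c'_3, \ldots, c'_k$. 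In particular, the score of $c_1$ in every nomination scheme extending $(c_1, c_2)$ is exactly $s_1 = f(c_1 \mid c_2)$.

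With this observation in hand, I would specialize the left-feasibility conditions ($\alpha$) and ($\beta$) to the case $i = 2$, $j = 1$. Condition ($\alpha$) collapses to the single inequality $s_1 \leq s^\star$, while ($\beta$) collapses to: \emph{if $s_1 < s^\star$, then $f(c'_1 \mid c_2) < s^\star$ for every $c'_1 \in P_1 \setminus \{c_1\}$.} Together these are precisely the disjunction ($\dagger$) $\lor$ ($\ddagger$): in case $s_1 = s^\star$ we get ($\dagger$); otherwise $s_1 < s^\star$ forces the bound $f(c'_1 \mid c_2) < s^\star$ on every other $c'_1 \in P_1$, and the bound on $c_1$ itself is automatic since $f(c_1 \mid c_2) = s_1 < s^\star$, giving ($\ddagger$).

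For the converse direction, whenever ($\dagger$) or ($\ddagger$) holds I would directly verify that the tuple $(c_1, c_2)$ is left-feasible for $i = 2$: the only feasibility conditions that need checking are ($\alpha$) and ($\beta$) at $j = 1$, and these have just been shown to hold. The claim that $c_1$ achieves exactly $s_1$ votes and $c_2$ achieves exactly $s_2$ votes from $V_{\leq 2}$ is immediate from the definitions in~(\ref{eq:def-scorepair-init}) together with the partitioning $V_{\leq 2} = V^{P_1} \cup V^{P_2} \cup V^{(P_1,P_2)}$. The only genuinely delicate point in the argument is the score-independence observation at the start; once that is nailed down, the rest is a straightforward unpacking of the definitions of left-feasibility and of $f(c_1 \mid c_2)$.
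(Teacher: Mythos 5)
Your proof is correct and follows essentially the same route as the paper's: both arguments reduce the claim to checking that conditions ($\alpha$) and ($\beta$) of left-feasibility, specialised to $i=2$ and $j=1$, collapse to the disjunction ($\dagger$)\,$\lor$\,($\ddagger$). You actually spell out two points the paper leaves implicit---that the score of any $P_1$-candidate equals $f(\cdot\mid c_2)$ independently of the remaining nominees, and that the quantifier over $P_1$ in ($\ddagger$) versus $P_1\setminus\{c_1\}$ in ($\beta$) is harmless because $f(c_1\mid c_2)=s_1<s^\star$---so your write-up is, if anything, more complete.
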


Now, we take the general case where $2<i \leq k$ and aim to compute the viable score pairs for the  pairs in $P_{i-1} \times P_i$.
\begin{restatable}[\linkproof{sec:proof-of-lemviablescorepairrec}]{lemma}{lemNEviablescorepairrecursion}
    \label{lem:NE-viable-scorepair-recursion}
    A pair $(s_{i-1},s_i)$ of non-negative integers with $s_{i-1}\leq s^\star$ is viable for $(c_{i-1},c_i) \in P_{i-1} \times P_i$ if and only if 
    there exist integers $s'_{i-1},s_{i-2} \in \{0,1,\dots,s^\star\}$ and candidate $c_{i-2} \in P_{i-2}$ such that 

    \smallskip
    \begin{minipage}{0.96\columnwidth}
    \begin{itemize}
            \item[(i)] $(s_{i-2},s'_{i-1})$ is a viable pair for~$(c_{i-2},c_{i-1})$,
            \item[(ii)] $s_{i-1}=s'_{i-1}+|\{v:v \in V^{(P_{i-1},P_i)}, c_{i-1} \succ_v c_i\}|$,
            \item [(iii)] $s_i=|\{v:v \in V^{(P_{i-1},P_i)}, c_i \succ_v c_{i-1}\}|+|V^{P_i}|$, and 
            \item[(iv)] either ($\dagger$) $s_{i-1}=s^\star$, 
            or ($\ddagger$) there is no $c'_{i-1} \in P_{i-1}$ for which  $f(c'_{i-1}|c_i,c_{i-2})\geq s^\star$ where 
    \end{itemize}
    \end{minipage}
    \begin{align*} 
    f(&c'_{i-1} |c_i,c_{i-2}) = |\{v:v \in V^{(P_{i-1},P_i)}, c'_{i-1}\succ_v c_i\}|\\
    & + |V^{P_{i-1}}| +
    \{v:v \in V^{(P_{i-2},P_{i-1})}, c'_{i-1}\succ_v c_{i-2}\}|     .
    \end{align*}    
\end{restatable}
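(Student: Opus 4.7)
The plan is to prove both directions by unpacking the definition of left-feasibility and exploiting a \emph{locality} property: by Lemma~\ref{lem:two-possible-votes} together with the definitions of the sets $V^{P_j}$ and $V^{(P_j,P_{j+1})}$, the score of any nominee $c_j$ in any (partial) nomination scheme containing $c_j$ depends only on the triple $(c_{j-1}, c_j, c_{j+1})$. In particular, the left-feasibility conditions $(\alpha)$ and $(\beta)$ at any index $j \leq i-2$ are invariant under modifying $c_i$ or any later nominee, so restrictions of a left-feasible scheme to shorter prefixes remain left-feasible.

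For the forward direction, I would start with a left-feasible scheme $(c_1,\ldots,c_i)$ witnessing viability of $(s_{i-1},s_i)$ for $(c_{i-1},c_i)$. Let $c_{i-2}$ be its $(i-2)$-th nominee, and let $s_{i-2}$ and $s'_{i-1}$ be the $V_{\leq i-1}$-scores of $c_{i-2}$ and $c_{i-1}$, respectively. By the locality observation, the restricted scheme $(c_1,\ldots,c_{i-1})$ is left-feasible for index $i-1$, yielding condition~(i); the bound $s'_{i-1}\leq s^\star$ follows from $s'_{i-1}\leq s_{i-1}\leq s^\star$. Conditions~(ii) and~(iii) are then routine score bookkeeping: the voter set added when passing from $V_{\leq i-1}$ to $V_{\leq i}$ is $V^{P_i}\cup V^{(P_{i-1},P_i)}$, with $V^{P_i}$ contributing entirely to $c_i$'s tally and $V^{(P_{i-1},P_i)}$ splitting between $c_{i-1}$ and~$c_i$ according to each voter's relative preference. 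For condition~(iv), the clause $(\beta)$ of left-feasibility at $j=i-1$ says exactly: if $s_{i-1}<s^\star$ then no deviation $c'_{i-1}\in P_{i-1}\setminus\{c_{i-1}\}$ reaches score~$s^\star$ in the deviated scheme; but the total score of any such $c'_{i-1}$ is precisely $f(c'_{i-1}\mid c_i, c_{i-2})$, because by Lemma~\ref{lem:two-possible-votes} the only voters who could ever vote for $c'_{i-1}$ lie in $V^{P_{i-1}} \cup V^{(P_{i-2},P_{i-1})} \cup V^{(P_{i-1},P_i)}$.

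For the backward direction, I would take a left-feasible scheme $(c_1,\ldots,c_{i-1})$ witnessing~(i) and append~$c_i$; verifying left-feasibility of $(c_1,\ldots,c_i)$ then completes the argument. Conditions at indices $j\leq i-2$ transfer directly by locality, since the scores they involve are unchanged. The new condition~$(\alpha)$ at $j=i-1$ demands $s_{i-1}\leq s^\star$, which is the precondition, while the new condition~$(\beta)$ at $j=i-1$ is precisely~(iv), once the total deviation score of any $c'_{i-1}\in P_{i-1}\setminus\{c_{i-1}\}$ is identified with $f(c'_{i-1}\mid c_i, c_{i-2})$. Finally, equations~(ii) and~(iii) ensure that $c_{i-1}$ and $c_i$ end up with $V_{\leq i}$-scores equal to $s_{i-1}$ and $s_i$.

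The main obstacle is making the locality argument rigorous on both ends: one must verify that the scores of nominees at indices $j\leq i-2$ are genuinely independent of $c_i, c_{i+1},\ldots$, using the partition of~$V$ into the sets $V^{P_j}$ and $V^{(P_j,P_{j+1})}$, and that $f(c'_{i-1}\mid c_i, c_{i-2})$ captures the \emph{full} score of a deviant $c'_{i-1}$, not merely its contribution from some sub-profile. Once this invariant is pinned down, both directions reduce to direct counting over the three voter groups $V^{P_{i-1}}$, $V^{(P_{i-2},P_{i-1})}$, and $V^{(P_{i-1},P_i)}$.
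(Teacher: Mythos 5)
Your proposal is correct and follows essentially the same route as the paper's proof: the forward direction restricts the witnessing left-feasible scheme to its length-$(i-1)$ prefix and reads off (i)--(iv) from conditions $(\alpha)$ and $(\beta)$, while the backward direction appends $c_i$ to a witness of (i) and re-verifies feasibility, identifying the deviation score of any $c'_{i-1}$ with $f(c'_{i-1}\mid c_i,c_{i-2})$. The only difference is that you make explicit the locality property (scores at index $j$ depend only on the nominees at $j-1,j,j+1$) that the paper leaves as ``straightforward to verify''; this is a sound and indeed welcome elaboration rather than a divergence.
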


Using dynamic programming based on Lemmas~\ref{lem:NE-viable-scorepair-i=1} and~\ref{lem:NE-viable-scorepair-recursion}, we can compute all left-viable score pairs for each candidate pair $(c_{i-1},c_i) \in P_{i-1} \times P_i$ increasingly for each $i=2,\dots,\kappa$; note that it suffices to compute score pairs~$(s_{i-1},s_i)$ where $s_{i-1}$ and 
$s_i$ are between~0 and~$s^\star$. 
Hence, we need to compute a binary variable for each  $s_{i-1},s_i \in \{0,\dots,s^\star\}$ and each $c_{i-1} \in P_{i-1}$ and $c_i \in P_i$, for $i=2,\dots,\kappa$, describing whether $(s_{i-1},s_i)$ is viable for~$(c_{i-1},c_i)$ or not; we can determine these values for $i=2$ using Lemma~\ref{lem:NE-viable-scorepair-i=1}, and we can compute them for increasing values of~$i$ using Lemma~\ref{lem:NE-viable-scorepair-recursion}. 

This allows us to compute all left--viable scores for each candidate in~$P_\kappa$. Computing right-viable scores can be done in a symmetric manner. Finally, we Lemma~\ref{lem:viable-score-vs-equilibria} to either find a candidate~$c_\kappa \in P_\kappa$ that wins in some Nash equilibrium with score~$s^\star$ or conclude correctly that no such candidate exists. 


\paragraph{Running Time.} 
Computing the party axis can be done in polynomial time, according to Theorem~\ref{thm:recognition-poly}.
Guessing the winning score~$s^\star$ of the distinguished party~$P_\kappa$ adds a factor of~$|V|$ to the running time necessary for computing the viable scores.
Checking the conditions of Lemmas~\ref{lem:NE-viable-scorepair-i=1} and~\ref{lem:NE-viable-scorepair-recursion} can clearly be done in polynomial; hence, the running time of our algorithm is polynomial in the input size.
This finishes the proof of Theorem~\ref{thm:find-Nash-Eq-poly}.

\section{Finding a Possible or Necessary President}
\label{sec:pospres}

Next, we show that the following problems are polynomial-time solvable 
in the domain of party-aligned single-peaked profiles, by an adaptation of our algorithm for Theorem~\ref{thm:find-Nash-Eq-poly}:

\medskip
\noindent
\begin{minipage}{0.88\columnwidth}
\fbox{
        \parbox{1.08\columnwidth}{{\textsc{Possible} (or \textsc{Necessary) President}}
        \begin{description}
            \item[{\bf Input:}] 
            A set~$C$ of candidates partitioned into a set~$\mathcal{P}$ of parties, a distinguished party~$P \in \P$, and a set~$V$ of voters with preference profile ${(\succ_v)_{v \in V}}$ over the set~$C$.
            \item[{\bf Question:}] 
            Is $P$ a winner in the election resulting from \emph{some} (or \emph{all}, respectively) nomination \scheme(s)?
        \end{description}
    }
    }
\end{minipage}
\medskip

Faliszewski et al.~(\citeyear{faliszewski2016}) proved that \pospres\ is $\NP$-complete for single-peaked preferences, while \necpres\ is $\coNP$-complete in general but becomes polynomial-time solvable for single-peaked preferences. 
They also showed that \pospres\ is polynomial-time solvable if the preferences are single-peaked with respect to an axis along which the candidates of each party are ordered consecutively, that is, if preferences are single-peaked and also party-aligned single-peaked.

The following theorem extends the tractability results of Faliszewski et al.~(\citeyear{faliszewski2016}) to the domain of party-aligned single-peaked preference profiles.

\begin{restatable}[\linkproof{sec:app-pospres}]{theorem}{thmPossiblePrespoly}
    \label{thm:Possible-Pres-poly}
    The \textsc{Possible} and \textsc{Necessary President} problems are polynomial-time solvable if the preference profile is party-aligned single-peaked.
\end{restatable}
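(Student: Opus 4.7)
The plan is to adapt the dynamic programming algorithm already developed for \findNashEq\ in Theorem~\ref{thm:find-Nash-Eq-poly}, exploiting the same combinatorial decomposition of the voter set that a party-aligned single-peaked profile provides. First I would compute a party axis $P_1 \triangleleft_\P \cdots \triangleleft_\P P_k$ via Theorem~\ref{thm:recognition-poly}, partition the voters according to Lemma~\ref{lem:two-possible-votes} into the blocks $V^{P_i}$ (voters committed to a single party) and $V^{(P_i,P_{i+1})}$ (voters swinging between two adjacent parties), and exploit the resulting fact that the score of any candidate $c_i$ in a reduced election depends only on the triple $(c_{i-1}, c_i, c_{i+1})$.

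For \pospres, the task is strictly easier than \findNashEq, because we no longer require the nomination scheme to be a Nash equilibrium; we merely want $P = P_\kappa$ to be a winner. I would therefore guess $P_\kappa$'s winning score $s^\star$ and reuse the left-/right-feasibility DP essentially verbatim, but with the stability conditions $(\beta)$, $(\beta')$ dropped and only the upper-bound conditions $(\alpha)$, $(\alpha')$ retained. Concretely, a pair $(s_{i-1}, s_i)$ would be deemed viable for $(c_{i-1}, c_i)$ whenever $s_{i-1} \leq s^\star$ and the score-update relations (ii)--(iii) from Lemma~\ref{lem:NE-viable-scorepair-recursion} are satisfied for some viable predecessor pair. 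An analog of Lemma~\ref{lem:viable-score-vs-equilibria} then yields that $P_\kappa$ is a possible president with score $s^\star$ iff some $c_\kappa \in P_\kappa$ admits a left-viable score $s_l$ and a right-viable score $s_r$ with $s_l + s_r - |V^{P_\kappa}| = s^\star$.

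For \necpres\ I would instead decide the complement: does some nomination scheme exist in which $P_\kappa$'s nominee is strictly beaten by another party? Since $c_\kappa$'s score is fixed once the central triple $(c_{\kappa-1}, c_\kappa, c_{\kappa+1})$ is chosen, I would enumerate this triple (only $|C|^3$ possibilities) to pin down $c_\kappa$'s exact score $s$, and then run a boolean DP on each side whose state at step $i$ consists of the pair $(c_i, c_{i+1})$ together with a flag indicating whether some already-placed candidate $c_j$ has achieved score strictly greater than $s$. Transitions use the same score formulas as before, and the flag is raised whenever the score of the newly determined $c_i$, computed from $(c_{i-1}, c_i, c_{i+1})$ using the corresponding blocks $V^{P_i}$ and $V^{(\cdot,\cdot)}$, exceeds $s$. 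Then $P_\kappa$ is a necessary president iff no choice of central triple yields a left-side or right-side extension that raises the flag.

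The main obstacle, as in the \findNashEq\ analysis, will be a cut-and-paste argument analogous to the proof of Lemma~\ref{lem:viable-score-vs-equilibria}, verifying that any pair of left- and right-feasible partial certificates actually glues into a full nomination scheme realising the claimed scores; correctly accounting for the swing voters in $V^{(P_{\kappa-1}, P_\kappa)}$ and $V^{(P_\kappa, P_{\kappa+1})}$ at the seam between the two halves is the delicate point, since these voters' ballots depend simultaneously on nominees from both sides. Modulo this bookkeeping, all steps are plainly polynomial: guessing the winning score and, for \necpres, the central triple adds at most an $O(|C|^3 \cdot |V|)$ factor, the DP tables contain polynomially many entries indexed by $(i, c_i, c_{i+1}, \cdot)$, and each transition is computable in polynomial time, so the overall algorithm runs in time polynomial in $|C|$ and $|V|$.
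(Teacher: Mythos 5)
Your treatment of \pospres\ is essentially the paper's own: guess the winning score $s^\star$, keep the $(\alpha)$-type cap conditions while dropping the stability conditions $(\beta)$, $(\beta')$, and glue a left- and a right-feasible certificate at $c_\kappa$ via $s_l+s_r-|V^{P_\kappa}|=s^\star$ (the paper even simplifies to single viable scores rather than score pairs, since without the deviation check the predecessor's score need only be capped, not remembered exactly). For \necpres, however, you take a genuinely different route. The paper stays inside the \pospres\ machinery: for each pair $(P^+,P^-)$ it asks whether some scheme makes $P^+$ a winner while $P^-$ is not, which it enforces simply by deleting the winning score $s^\star$ from the set of viable scores of $P^-$'s candidates (legitimate because feasibility already caps every nominee at $s^\star$, so ``not a winner'' is equivalent to ``score $\neq s^\star$''); it then iterates over all $P^+\neq P$. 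You instead attack the complement directly: enumerate the central triple $(c_{\kappa-1},c_\kappa,c_{\kappa+1})$, which by Lemma~\ref{lem:two-possible-votes} pins down $c_\kappa$'s exact score $s$, and run a boolean-flag DP on each side recording whether some nominee exceeds $s$ --- correct because, given the central triple, the scores attainable on the left and on the right are independent. Your version avoids the outer loop over candidate winning parties $P^+$ and over the guessed score, at the cost of an extra DP with a different state; the paper's version reuses one algorithm nearly verbatim. Both are sound and polynomial; just mind the degenerate triples when $\kappa\in\{1,k\}$ and note that the ``flag'' must fire only on scores \emph{strictly} greater than $s$, since co-winners still make $P$ a winner --- both of which you have implicitly right.
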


\begin{proofsketch}
    The algorithm for \findNashEq\ presented in Section~\ref{sec:findNashEq} can be adapted in a straightforward way to solve the \pospres\ problem; in fact, this is a simplification, since we only need to ensure that the distinguished party is a winner in some nomination scheme~$\vec{c}$, but we do not require $\vec{c}$ to be a Nash equilibrium. 
    
    To solve the \necpres\ problem, we apply a slightly modified variant of our algorithm for \pospres\ that not only ensures that the distinguished party~$P^+$ is a winner in some nomination scheme, but also guarantees that some other party~$P^-$ is \emph{not} a winner; running this algorithm for each party~$P^+$ other than~$P^-$, we can decide if $P^-$ wins in all reduced elections.
\end{proofsketch}

\section{Discussion}

\begin{table}
	\begin{center}
		\renewcommand{\arraystretch}{1.5}
		\renewcommand{\tabcolsep}{5pt}

	\scalebox{.9}{	
    \begin{tabular}{l c c c}
			& {\bf\textsc{Equi Pres}} & {\bf \textsc{Pos Pres}} & {\bf\textsc{Nec Pres}}  \\ \toprule
			{\bf SP \, \& \, PASP} & \textsc{P} {\footnotesize[\Shcref{thm:find-Nash-Eq-poly}]} 
            & \textsc{P} {\footnotesize[\Shcref{thm:Possible-Pres-poly}]} &  
            \textsc{P}             {\footnotesize[F'16]} 
            \\
			{\bf PASP } &  \textsc{P} {\footnotesize[\Shcref{thm:find-Nash-Eq-poly}]} 
            & 
            \textsc{P} {\footnotesize[\Shcref{thm:Possible-Pres-poly}]}
            & 
            \textsc{P} {\footnotesize[\Shcref{thm:Possible-Pres-poly}]} 
            \\
            {\bf SP} 
            &  \textsc{NP}-c {\footnotesize[\Shcref{thm:find-Nash-SP-NPhard}]} 
            &  \textsc{NP}-c {\footnotesize[F'16]}
            & \textsc{P} {\footnotesize[F'16]} \\
			\bottomrule
		\end{tabular}
        }
			\end{center}
		
		\caption{Summary of our algorithmic results. 
        SP (and PASP) stands for single-peaked (party-aligned single-peaked) preferences. [F'16] marks results by~\citet{faliszewski2016}.}
		\label{tab:overview}
	\end{table}
We have explored computational questions about strategic candidate nomination by parties in a model where voters' preferences are party-aligned single-peaked, with a focus on the concept of Nash equilibria; see Table~\ref{tab:overview} for an overview of our algorithmic results. Studying different restrictions on the preference domain or voting rules other than Plurality are a natural direction for further research in the same model.
More generally, extensions of our model that deal with possible interactions between the parties also seem interesting. 
For example, one could consider what a group of parties can achieve if they were to coordinate their nominations, with a coalition $\mathcal{C} \subseteq \mathcal{P}$ being \emph{powerful} if there exists a tuple $T_\mathcal{C} = (c_i)_{P_i \in \mathcal{C}}$  of candidates such that for every nomination \scheme\ extending $T_\mathcal{C}$, the Plurality winner is in $T_\mathcal{C}$. 
This definition generalizes the notion of a \emph{necessary president} (i.e., a powerful singleton coalition), and we would like to know under what conditions we can determine it. Similar generalisations can be attempted for possible president, as well as a possible or necessary president in an equilibrium.

Understanding strategic nominations in district-based elections is another interesting direction, where the preference structure could improve equilibrium computation results. 
Finally, extensive interaction is worth exploring, where parties hold primaries at different times, with more nuanced (e.g., subgame-perfect) equilibria.

\section{Acknowledgments}
This work is supported by the European Union (ERC, PRAGMA, 101002854). Views and opinions expressed are however those of the author only and do not necessarily reflect those of the European Union or the European Research Council. Neither the European Union nor the granting authority can be
held responsible for them.  Grzegorz Lisowski acknowledges support by the European Union under the Horizon Europe project 
{Perycles} (Participatory Democracy that Scales).
Ildik\'o Schlotter was supported by the Hungarian Academy of Sciences under its Momentum Programme (LP2021\nobreakdash-2) and its J{\'a}nos Bolyai Research Scholarship.

\raisebox{-0.2\height}{\includegraphics[width=0.4\columnwidth]{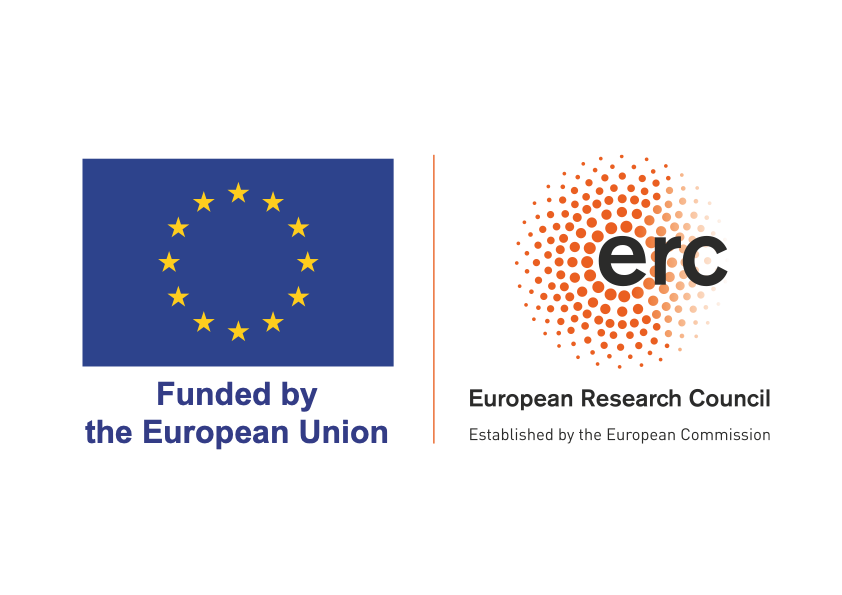}}
\hspace{4em}
\raisebox{0.4\height}{\includegraphics[width=0.5\columnwidth]{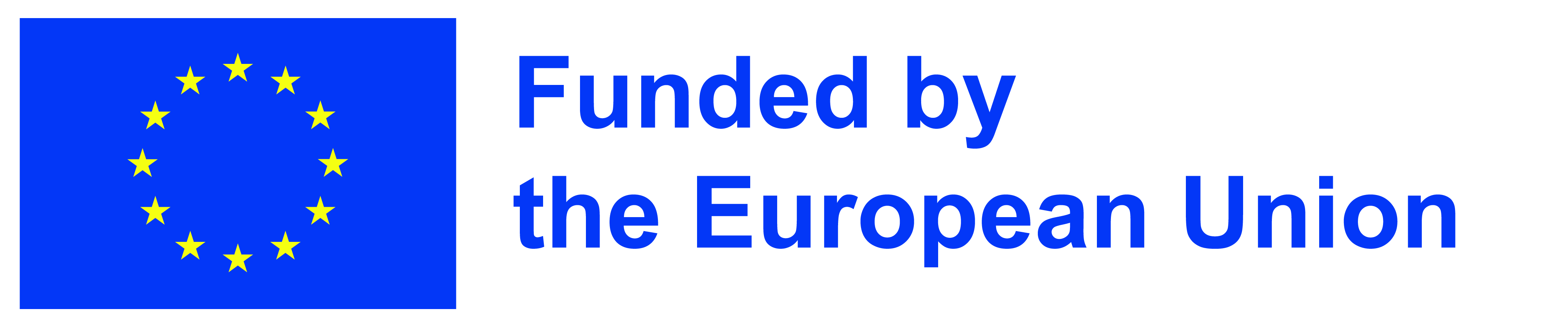}}

\bibliographystyle{ACM-Reference-Format} 
\bibliography{file}

\newpage

\ifshort

\else
\clearpage
\begin{appendices}
\section{Missing proofs from Section~\ref{sec:recognition}}

\subsection{The proof of Lemma~\ref{lem:vote_party_single_peaked}}
\label{sec:proof-of-lemvotePSP}

\lemvotepartysinglepeaked*
\begin{proof}
First, we prove that if the vote $\succ_v$ is party-aligned single-peaked according to $\triangleleft_{\mathcal{P}}$, then conditions (a)--(c) are met. 

To see that~(a) holds, assume $P_j \notin \{P_1,P_m\}$. Then for every perceived axis~$\triangleleft_v$ compatible with~$\triangleleft_\P$ we have $h_{j-1} \triangleleft_v l_j \triangleleft_v h_{j+1}$.
Hence, if $h_{j-1} \succ_v l_j$ (which is equivalent with $l_j \not \succ_v h_{j-1}$), then the single-peakedness of $\succ_v$ with axis~$\triangleleft_v$ implies that $l_j \succ_v h_{j+1}$, proving~(a).

To see condition~(b), consider two parties $P_i$ and~$P_{i+1}$ for $j < i$. Then for every perceived axis~$\triangleleft_v$ compatible with~$\triangleleft_\P$ we have $h_j \triangleleft_v l_i \triangleleft_v h_{i+1}$ for the top candidate~$h_j$. Thus, due to the single-peakedness of~$\succ_v$ with axis $\triangleleft_v$, we know that $h_j \succ_v l_i$ implies $l_i \succ_v h_{i+1}$. 
%
The argument for condition~(c) is 
symmetric to that for condition~(b).

We move on to prove that if conditions (a)--(c) are satisfied, then the vote~$\succ_v$ is party-aligned single-peaked under~$\triangleleft_{\mathcal{P}}$. We show how to construct a total ordering $\triangleleft_v$ of the set~$C$ of candidates that is compatible with the party ordering $\triangleleft_{\mathcal{P}}$ and for which vote $\succ_v$ is single-peaked under $\triangleleft_v$.


If $P_j$ is the first party in $\triangleleft_{\mathcal{P}}$, i.e., $j=1$, we identify the highest-ranked candidate of party $P_2$, that is, $h_2$. We place every candidate from party $P_j = P_1$ ranked by~$v$ lower than~$h_2$ on the left of the top candidate~$h_1$ in decreasing order of their rank according to~$\succ_v$ (i.e., setting $c' \triangleleft_v c$ whenever $c \succ_v c'$ for any two such candidates~$c$ and~$c'$). Next, we place the top candidate~$h_1$, and then the remaining candidates from~$P_1$ (those preferred to~$h_2$) in increasing order of their rank according to $\succ_v$. 
Condition~(b) guarantees that the remaining candidates from parties $P_2, P_3, \ldots, P_m$ can be placed in $\triangleleft_v$ in increasing order of their rank according to~$\succ_v$ so that $\triangleleft_v$ becomes compatible with~$\triangleleft_{\mathcal{P}}$. The case when $P_j=P_m$ is symmetric.

Suppose now that $P_j$ is neither the first nor the last party according to $\triangleleft_{\mathcal{P}}$. Then condition~(a) guarantees that at least one of the highest-ranked candidates of parties $P_{j-1}, P_{j+1}$, that is, one of $h_{j-1}$ and $h_{j+1}$, is ranked lower than $l_j$ by~$v$. Breaking symmetry, we may assume that $l_j \succ_{v} h_{j-1}$. Similarly to the previous case, we place every candidate from party $P_j$ ranked lower than $h_{j+1}$ on the left of the top candidate~$h_j$ in decreasing order of their rank according to $\succ_v$. Next, we place the top candidate~$h_j$, and then the remaining candidates from~$P_j$ in increasing order of their rank according to~$\succ_v$. Conditions~(b) and~(c) guarantee that the candidates from the remaining parties can be placed in the total order~$\triangleleft_v$ so that it is compatible with~$\triangleleft_\P$.
\end{proof}

\subsection{The proof of Lemma~\ref{lem:bottom_parties}}
\label{sec:proof-oflembottomparties}
\lembottomparties*
\begin{proof}
If party $P_i$ is neither in the leftmost nor in the rightmost position in $\triangleleft_{\mathcal{P}}$, then for the perceived axis~$\triangleleft_v$ of voter~$v$,  which must be compatible with~$\triangleleft_{\mathcal{P}}$, there exist candidates~$a$ and~$b$ such that $a \triangleleft_v c \triangleleft_v b$. However, then $a \succ_{v} c$ and $b \succ_{v} c$ contradict the single-peakedness of~$\succ_v$  with the perceived axis~$\triangleleft_v$.
\end{proof}

\subsection{The proof of Lemma~\ref{lem:party_fixed_placement}}
\label{sec:proof-of-partyfixedplacement}

\lempartyfixedplacement*
\begin{proof}
Since the set of all candidates from parties in $R \cup L$ is not the suffix of the vote $\succ_v$, it holds, by definition, that $b \succ_v a$. Moreover, for every party in $P_i \in \mathcal{P} \setminus (L \cup R)$ different from~$P_a$, it holds that $v$ prefers every candidate of~$P_i$ to~$a$. 

Assume that $P_a$ is not placed in~$\triangleleft_\P$ as claimed. 
This means that there exists a party $P_i \in \mathcal{P} \setminus (L \cup R \cup \{P_a\})$  such that $P_a$ is placed between $P_i$ and $P_b$ according to $\triangleleft_{\mathcal{P}}$. 
However, both $P_b$ and $P_i$ have at least one candidate that is ranked strictly higher than $a \in P_a$, which contradicts our assumption that $\succ_v$ is party-aligned single-peaked with axis~$\triangleleft_\P$.
\end{proof}

\subsection{The proof of Lemma~\ref{lem:reducing_problem}}
\label{sec:proof-of-reducingproblem}

\lemreducingproblem*
\begin{proof}
    Consider a voter~$v \in V$, and let $\triangleleft^Q_v$ be a perceived axis over~$Q$  compatible with $\triangleleft_Q$ such that the restriction of~$\succ_v$ to~$Q$ is single-peaked w.r.t.~$\triangleleft^Q_v$. 
    Let also $\triangleleft^C_v$ be a perceived axis compatible with $\triangleleft_\P$ such that $\succ_v$ is single-peaked w.r.t.~$\triangleleft^C_v$. Define now a perceived axis $\triangleleft_v$ by combining these two such that 
    \begin{itemize}
        \item for each $c,c' \in C \setminus Q$ we set $c \triangleleft_v c'$ exactly if $c \triangleleft^C_v c'$;
        \item for each $c,c' \in Q$ we set $c \triangleleft_v c'$ exactly if $c \triangleleft^Q_v c'$;
        \item for each $c \in C \setminus Q$ and $c' \in Q$, we set $c \triangleleft_v c'$ exactly if $c \in P_1 \cup \dots P_i$.
    \end{itemize}
    It is straightforward to check that $\triangleleft_v$ is compatible with the ordering $P_1 , \dots, P_i , Q_1, \dots,  Q_k , P_j, \dots, P_m$. 
    Hence, it remains to show that $\succ_v$ is single-peaked w.r.t.~$\triangleleft_v$.
    Consider candidates $a,b,$ and~$c$ such that $a \triangleleft_v b \triangleleft_v c$ and $a \succ_v b$; we aim to show that $b \succ_v c$. 
    
    Now, if $\{a,b,c\}$ is a subset of $C \setminus Q$ (or of~$Q$), then $b \succ_v c$ follows because the vote  $\succ_v$ (or its restriction to~$Q$) is single-peaked w.r.t.\ the perceived axis~$\triangleleft^C_v$ (or $\triangleleft^Q_v$, respectively). Hence, we may assume that $|Q \cap \{a,b,c\}| \in [2]$. 
    
     First, if $b \in Q$ and $c \in C \setminus Q$, then $b \succ_v c$ holds, because $C \setminus Q$ forms a suffix of~$\succ_v$.
     Second, if $\{a,b \} \subseteq C \setminus Q$ or $\{b,c\} \subseteq C \setminus Q$, then our definition of~$\triangleleft_v$ yields $a \triangleleft_v^C b \triangleleft_v^C c$, so the single-peakedness of~$\succ_v$ w.r.t.~$\triangleleft_v^C$ means that $a \succ_v b$ indeed implies $b \succ_v c$.
     The only remaining possibility is that $a \in C \setminus Q$ and $b \in Q$. However, in this case $a \succ_v b$ cannot happen, because $C \setminus Q$ forms a suffix of~$\succ_v$. Hence, in all possible cases, we obtain $b \succ_v c$, as required.
\end{proof}

\section{Missing proofs from Section~\ref{sec:findNashEq}}

\subsection{The proof of Theorem~\ref{thm:find-Nash-SP-NPhard}}
\label{sec:proof-of-thmfindNash}

\thmfindNashSPNPhard*
\begin{proof}
    We reduce from the $\NP$-hard \textsc{Exact 3-Cover} problem~\cite{gar-joh:b:int} whose input is a set $U=\{u_1,\dots,u_{3n}\}$ and a family~$\S=\{S_1,\dots,S_m\}$ of {size-3} subsets of~$U$, and the task is decide if there are $n$ sets in~$\S$ whose union is~$U$.

    Let us create an instance 
    $I$ of \Nashexistence\ as follows.
    We create $m$ \emph{triple segments} and $m-n$ \emph{auxiliary segments} on the real line. Each triple segment corresponds to a set~$S_j \in \S$ and has length $L=4(m-n)+7$, so that the segment corresponding to~$S_j$ is located at the interval $[(j-1)L,jL]$. 
    Each of the $m-n$ auxiliary segments has length~$14$, with the $i^\textrm{th}$  located at $[K+14(i-1),K+14i]$ where $K=Lm+1$ is the start of the first auxiliary segment. There will be two voters located in each triple segment, and four voters in each auxiliary segment.

    \begin{figure}
        \centering
\scalebox{.8}{        \includegraphics[width=   0.7\columnwidth]{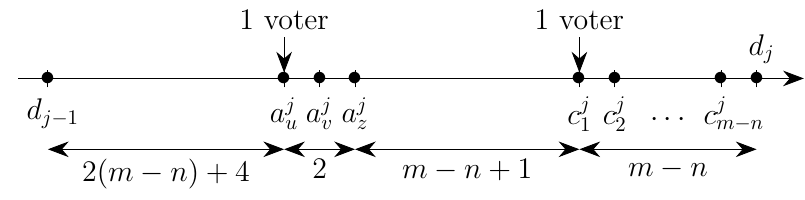}}
        \caption{The location of voters and candidates within the triple segment corresponding to~$S_j=\{a_u,a_v,a_z\} \in \S$.}
        \label{fig:triple_seg}
    \end{figure}

    \begin{figure}
        \centering
      \scalebox{.8}{ \includegraphics[width=0.7\columnwidth]{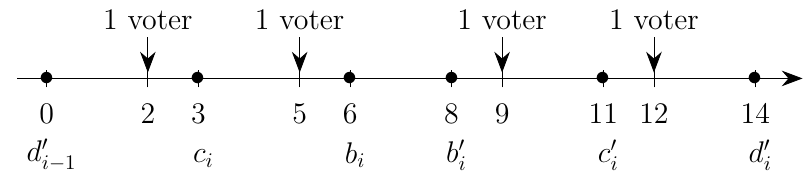}}
        \caption{The location of voters and candidates within the $i^\textrm{th}$ auxiliary segment for some $i \in [m-n]$.
        The numbers indicating locations are relative to the start of the segment.}
        \label{fig:aux_seg}
    \end{figure}
    
    We create a party $A_u=\{a_u^j:u \in S_j \in \S\}$ for each element ${u \in U}$, and 
    two parties $B_i=\{b_i,b'_i\}$ and $C_i=\{c_i,c'_i,c_i^1,\dots,c_i^m\}$  for each index $i \in [m-n]$. We also create dummy candidates $d_1,\dots,d_m$  
    and $d'_0,\dots,d'_{m-n}$, each of them contained in a singleton party, and located at the beginning or the end of some segment. 
    The locations of all candidates and all voters along the real line are given in Figures~\ref{fig:triple_seg} and~\ref{fig:aux_seg}. It is straightforward to check that voters' preferences are strict linear orders over the candidate set, so the constructed instance is within the single-peaked preference domain.

    \smallskip
    We claim that $(U,\S)$ is a yes-instance of \textsc{Exact 3-Cover} if and only if the constructed instance~$I$ admits a Nash equilibrium.
    \smallskip

    Notice that any candidate~$a_u^j$ can only obtain the vote of at most one voter, namely the ``left'' voter within the $j$-th triple segment, because the ``right'' voter in that segment prefers~$d_j$ to~$a_u^j$. This will imply that no party~$A_u$, $u \in U$, can win in any reduced election. 

    Assume that $\S' \subseteq \S$ contains exactly  $n$ sets whose union is~$U$. Let us fix a bijection~$\varphi:[m-n] \to \{j:S_j \in \S \setminus \S'\}$. Consider the nomination \scheme\ $\vec{c}$ where
    party~$A_u$ nominates the candidate $a_u^j$ for which $u \in S_j \in \S'$,
    party~$C_i$ nominates $c_i^{\varphi(i)}$, and all other nominations are fixed arbitrarily. 
    Notice that each party~$B_i$ obtains two votes from the two middle voters in the $i^\textrm{th}$ auxiliary segment, because no candidate $c_i$ or $c'_i$ is nominated.
    Also, each party~$C_i$ obtains two votes from the two voters within the triple segment corresponding to~$S_{\varphi(i)}$, because no other party has a nominee within that segment (except for the parties of dummy candidates). 
    It is also clear that each dummy candidate obtains at most two votes.
    Thus, all parties $B_i$ and $C_i$, $i \in [m-n]$, are winners, and furthermore, no Nash deviation is possible for any party~$A_u$ as each such party can obtain at most one vote. Hence, $\vec{c}$ is a Nash equilibrium.

    Assume now that there is a nomination scheme $\vec{c}$ that is a Nash equilibrium. It is not hard to check that each nominee in a segment can obtain at most two votes. Moreover, no candidate~$c_i$ or~$c'_i$ can be present in~$\vec{c}$, because whenever $c_i$ or~$c'_i$ is nominated, there always exists a Nash deviation either for $B_i$ or for~$C_i$, as shown in Figure~\ref{fig:aux-deviations}. In fact, every party~$C_i$ has to be a winner in~$\vec{c}$, as otherwise it has a Nash deviation by nominating~$c_i$ or~$c'_i$ according to Figure~\ref{fig:aux-deviations}. Consider the nominees $c_i^j$ of parties~$C_i$ for $i \in [m-n]$, and notice that each such nominee~$c_i^j$ must be the only nominee located in the triple segment~$S_j$, as otherwise $c_i^j$ (or some other nominee $c_{i'}^j$) would not obtain the two votes necessary for becoming a winner. Hence, all $3n$ nominees of the parties~$A_u$, $u \in U$ must be located in the $n$ triple segments that do not contain any candidate nominated by some party~$C_i$. This means that the $n$ sets in~$\S$ corresponding to these segments cover~$U$, as required, showing the correctness of our reduction and proving that \Nashexistence\ is $\NP$-hard.

    \begin{figure}[t]
		\begin{center}
		\scalebox{1}{\nfgame{$c_i$ $c'_i$ $b_i$ $b'_i$ $1$ $2$ $2$ $1$ $2$ $1$ $1$ $2$}}
        \hspace{4em}
		\scalebox{1}{\nfgame{$c_i$ $c'_i$ $b_i$ $b'_i$ $0$ $1$ $1$ $0$ $1$ $0$ $0$ $1$}}        
		\caption{The left figure shows the numbers of voters voting for candidates of $C_i$ and $B_i$ in all nomination \schemes\ where $C_i$ nominates $c_i$ or $c'_i$. On the right, the corresponding winners and non-winners among these parties are indicated by ``$1$'' and~``$0$'', respectively.
        \label{fig:aux-deviations}
        }\end{center}        
	\end{figure}

    To show the $\NP$-hardness of \findNashEq, we can modify the reduction as follows: we add a single new candidate~$c^\star$ forming its own singleton party, and placed sufficiently far away from every other candidate; we further add two voters who are placed at the same place as~$c^\star$ and thus always vote for~$c^\star$. Then $c^\star$ is a winner in some Nash equilibrium for the modified instance~$I'$ if and only if there exists a Nash equilibrium for the original instance~$I$ which in turn happens if and only if $(U,\S)$ is a yes-instance of \textsc{Exact 3-Cover}. This proves the $\NP$-hardness of  \findNashEq.    
\end{proof}

\subsection{The proof of Lemma~\ref{lem:two-possible-votes}}
\label{sec:proof-of-twopossvotes}

\lemtwopossiblevotes*
\begin{proof}
    Let party~$A$ contain the top candidate for voter~$v$, and let $a$ and $a'$ be the two extremal candidates of party~$A$ according to the perceived axis~$\triangleleft_v$ for~$v$, so that each candidate of party~$A$ appears between $a$ and~$a'$ along the axis~$\triangleleft_v$. Let us further break symmetry by assuming $a \succ_v a'$ and $a' \triangleleft_v a$ (otherwise, we may reverse our axes or rename the candidates). 
    Define $b$ as the candidate directly following~$a$ along~$\triangleleft_v$ so that $a' \triangleleft_v a \triangleleft_v b$. 
    Let $B$ the party containing~$b$; then party-aligned single-peakedness yields $A \triangleleft_{\P} B$, with no party placed between~$A$ and~$B$ along~$\triangleleft_\P$.
    
    We claim that in all nomination \schemes, $v$'s most-preferred candidate belongs to party~$A$ or party~$B$. 
    Indeed: since $A$ contains $v$'s top candidate, 
    we know that for each party~$C$ with $A \triangleleft_{\P} B \triangleleft_{\P} C$, $v$ prefers every candidate of~$B$ to each candidate of~$C$; similarly,  
    for each party~$C$ with $C \triangleleft_{\P} A \triangleleft_{\P} B$, candidate~$a'$ and, hence, all candidates of~$A$, are preferred by~$v$ to each  candidate of~$C$. In either case, the nominee of~$C$ in any nomination \scheme\ is less preferred by~$v$ than the nominee of either~$A$ or of~$B$, and hence cannot obtain $v$'s vote.

    Furthermore, if there is a candidate outside~$A$ that $v$ prefers to some candidate in~$A$, then the most-preferred such candidate must be exactly~$b$; this yields an efficient way to determine~$A$ and~$B$ by simply going through the top $|A|+1$ candidates for~$v$.
\end{proof}

\subsection{The proof of Lemma~\ref{lem:viable-score-vs-equilibria}}
\label{sec:proof-of-lemviablescores}

\lemviablescorevsequilibria*
\begin{proof}
    Assume first that (a) holds. 
    Define $s_l$ and $s_r$ as the number of votes that $c_\kappa$ obtains from voters in~$V_{\leq \kappa}$ and in~$V_{\geq \kappa}$, respectively, in the nomination \scheme\ $\vec{c}=(c_1,\dots,c_k)$. Recall that $V_{\leq \kappa} \cap V_{\geq \kappa}=V^{P_\kappa}$ and thus $s^\star=s_r+s_l-|V^{P_\kappa}|$.
    It is clear that $\vec{c}$ is left-feasible: condition~($\alpha)$ holds because no nominee obtains more than~$s^\star$ votes in~$\vec{c}$, and condition~($\beta$) holds since $\vec{c}$ is a Nash equilibrium and hence admits no Nash deviation.

    Assume now that (b) holds. Let a left- and a right-feasible partial nomination \scheme\ $(c_1,\dots,c_\kappa)$ and $(c_\kappa,\dots,c_k)$, respectively. for which  $c_\kappa$ obtains $s_l$ votes from~$V_{\leq \kappa}$ in the former and $s_r$ votes from~$V_{\geq \kappa}$ votes in the latter. Then $c_\kappa$ is a winner in the nomination \scheme\ $\vec{v}=(c_1,\dots,c_\kappa,\dots,c_k)$ with score~$s^\star$ due to~($\alpha$), and 
    $\vec{v}$ is a Nash equilibrium due to~($\beta$).
\end{proof}

\subsection{The proof of Lemma~\ref{lem:NE-viable-scorepair-i=1}}
\label{sec:proof-of-lemviablescorepair}
\lemNEviablescorepairi*
\begin{proof}
    First, if ($\dagger$) or ($\ddagger$) holds, then $(c_1,c_2)$ is a partial nomination \scheme\ satisfying both ($\alpha$) and ($\beta$) in the definition of (left-)feasibility.
    Second, if neither ($\dagger$) nor ($\ddagger$) holds, then  either $s_1>s^\star$ in which case 
    $(c_1,c_2)$ violates~($\alpha$), 
    or $s_1<s^\star$ but there exists some $c'_1 \in P_1$ that can obtain at least~$s^\star$ votes in every nomination scheme containing~$c'_1$ and~$c_2$ (and thus lead to a Nash deviation) in which case $(c_1,c_2)$ violates~($\beta$).
\end{proof}

\subsection{The proof of Lemma~\ref{lem:NE-viable-scorepair-recursion}}
\label{sec:proof-of-lemviablescorepairrec}

\lemNEviablescorepairrecursion*
\begin{proof}
    Let $\vec{c}=(c_1,\dots,c_{i-2},c_{i-1},c_i)$ be a left-feasible partial nomination \scheme\  for~$i$ where $c_j$ obtains $s_j$ votes for $j \in \{i-1,i\}$ from voters in~$V_{\leq i}$, witnessing the viability of~$(s_{i-1},s_i)$ for $(c_{i-1},c_i)$. 
    Let $s_{i-2}$ and $s'_{i-1}$ be the votes that $c_{i-2}$ and~$c_{i-1}$, respectively, obtain from voters in~$V_{\leq i-1}$. It is then straightforward to verify that (i)--(iii) hold. Moreover, if $s_{i-1}<s^\star$, then (as there can be Nash deviation for~$\vec{c}$ by party~$P_{i-1}$ due to condition~($\beta)$ in the definition of left-feasibility) there is no $c'_{i-1} \in P_{i-1}$ that obtains at least $s^\star$ votes when nominated together with $c_{i-2}$ and~$c_i$---which is equivalent with the condition $f(c'_{i-1}|c_i,c_{i-2}) \geq s^\star$. This means that (iv) is satisfied as well.

    Assume now that integers~$s_{i-2},s'_{i-1} \in \{0,1,\dots,s^\star\}$ and candidate~$c_{i-2} \in P_{i-2}$ satisfy conditions (i)--(iv), and let $\vec{c}_-=(c_1,\dots,c_{i-2},c_{i-1})$ be a left-feasible partial nomination \scheme\ for~$i-1$ witnessing~(i). Define the partial nomination \scheme\
    $\vec{c}=(c_1,\dots,c_{i-2},c_{i-1},c_i)$. Then $\vec{c}$ satisfies ($\alpha$), due to the left-feasibility of~$\vec{c}_-$ and because $c_{i-1}$ obtains $s_{i-1}\leq s^\star$ votes in~$\vec{c}$ from voters in $V_{\leq i}$ due to~(ii). 
    Additionally, $\vec{c}$ also satisfies ($\beta$) due to~(iv) and the left-feasibility of~$\vec{c}_-$. 
    Thus, $\vec{c}$ is left-feasible for~$i$, and since  
    $c_i$ obtains $s_i$ votes from voters in~$V_{\leq i}$ due to~(iii), $\vec{c}$ witnesses the viability of~$(s_{i-1},s_i)$ for~$(c_{i-1},c_i)$. 
\end{proof}

\section{The missing proof of Theorem~\ref{thm:Possible-Pres-poly}}
\label{sec:app-pospres}
This section is dedicated to prove the following result.

\thmPossiblePrespoly*

\subsection{Solving the \pospres\ problem}
We first deal with the \pospres\ problem.
Let $(C,\P,P,V,(\prec_{v})_{v \in V})$ be our input instance for the \pospres\ problem.
Our algorithm is essentially a simplification of the algorithm described in Section~\ref{sec:findNashEq} for the \findNashEq\ problem: we simply need to remove the constraint that ensures that $P$ is a winner in a Nash equilibrium. For completeness, we give a detailed proof here.

\smallskip

Again, we start by computing a party axis~$\triangleleft_\P$, and let $P_1,\dots,P_k$ be the parties in~$\P$ ordered according to~$\triangleleft_\P$, with $P_\kappa=P$ being the distinguished party. 
We will re-use Lemma~\ref{lem:two-possible-votes} and the consequent partitioning of the voters introduced in Section~\ref{sec:findNashEq}; in particular, we will rely on the notation for voter sets~$V_{\leq i}$ and~$V_{\geq i}$, as defined in~(\ref{equ:def-Vleqi})--(\ref{equ:def-Vgeqi}).

\smallskip
After computing the party axis, our algorithm next guesses the 
score~$s^\star$ obtained by the nominee of the distinguished party~$P_\kappa$ in some (hypothetical) reduced election where $P_\kappa$ is a winner; we fix~$s^\star$ for the remainder.

\paragraph{PP-feasible partial nomination \schemes.}
We now adapt the feasibility concepts introduced in Section~\ref{sec:findNashEq} to fit our current purposes.
Given some index~$i \in [k]$ at most~$\kappa$, a 
\emph{left-PP-feasible partial nomination \scheme\ for~$i$} is an $i$-tuple $(c_1,\dots,c_i)$ of candidates such that 
$c_j \in P_j$ for each $j \in [i]$, and moreover, for each $j \in [i-1]$ party~$P_j$ obtains at most~$s^\star$ votes 
in every nomination \scheme\ $(c_1, \dots, c_i, c'_{i+1},\dots,c'_k)$ 
from voters in~$V_{\leq i}$. 
The analogous notion of \emph{right-PP-feasible partial nomination \scheme} is defined symmetrically.

\paragraph{PP-viable scores.}
We next adapt our notion of viable scores from Section~\ref{sec:findNashEq} in a straightforward manner.
For some index~$i \in [k]$ and candidate~$c_i \in P_i$, we let $s_i$ be a \emph{left- (\textup{or} right-) PP-viable score for~$c_i$} if there is a left- (or right-) PP-feasible partial nomination \scheme\ where $c_i$ obtains exactly~$s_i$ votes from~$V_{\leq i}$ (from $V_{\geq i}$, respectively). 

The next lemma is a direct analog of Lemma~\ref{lem:viable-score-vs-equilibria}, formulating the solvability of our instance for \pospres\ based on the concepts of left- and right-PP-viability.
 \begin{lemma}
    \label{lem:viable-score-vs-possible-winner}
    Given a candidate~$c_\kappa \in P_\kappa$, the following statements are equivalent: 

\smallskip
\begin{minipage}{0.96\columnwidth}
    \begin{itemize}
        \item[(a)] there exists a nomination \scheme\ $(c_1,\dots,c_\kappa,\dots,c_k)$ in which $c_\kappa$ is a winner with score~$s^\star$;
        \item[(b)] there exist non-negative  integers $s_l$ and~$s_r$ satisfying $s_l+s_r-|V^{P_\kappa}|=s^\star$ such that 
        $s_l$ is a left-PP-viable score for~$c_\kappa$
        and $s_r$ is a right-PP-viable score for~$c_\kappa$.
    \end{itemize}
    \end{minipage}
\end{lemma}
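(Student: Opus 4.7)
The plan is to mirror the structure of the proof of Lemma~\ref{lem:viable-score-vs-equilibria}, dropping everything related to Nash deviations since PP-feasibility is a strict simplification of (left/right-)feasibility: conditions~($\alpha$) and~($\alpha'$) survive (``no nominee of another party exceeds the target score~$s^\star$''), while ($\beta$) and~($\beta'$) are simply discarded.

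For the forward direction~(a)$\Rightarrow$(b), I would start from a nomination scheme $\vec{c}=(c_1,\dots,c_k)$ in which $c_\kappa$ is a winner with score~$s^\star$. Define $s_l$ (respectively $s_r$) as the number of votes that $c_\kappa$ receives from $V_{\leq \kappa}$ (respectively $V_{\geq \kappa}$). Since Lemma~\ref{lem:two-possible-votes} guarantees that every voter votes either for a party in $P_1 \cup \dots \cup P_\kappa$ (if in $V_{\leq \kappa}$) or in $P_\kappa \cup \dots \cup P_k$ (if in $V_{\geq \kappa}$), and the only voters double-counted are exactly those in $V^{P_\kappa}=V_{\leq \kappa} \cap V_{\geq \kappa}$ who always vote for the nominee of~$P_\kappa$, we obtain $s_l + s_r - |V^{P_\kappa}| = s^\star$. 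Then the prefix $(c_1,\dots,c_\kappa)$ is left-PP-feasible: for each $j \in [\kappa-1]$, since $c_\kappa$ wins with score~$s^\star$ in the full scheme, $c_j$ obtains at most~$s^\star$ votes from~$V$ and hence at most~$s^\star$ votes from~$V_{\leq \kappa}$, and this bound depends only on $(c_1,\dots,c_\kappa)$ (it is independent of the nominees $c'_{\kappa+1},\dots,c'_k$) because voters in~$V_{\leq \kappa}$ never vote for nominees to the right of~$P_\kappa$. Symmetrically, $(c_\kappa,\dots,c_k)$ is right-PP-feasible.

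For the converse~(b)$\Rightarrow$(a), I would take a left-PP-feasible partial scheme $(c_1,\dots,c_\kappa)$ and a right-PP-feasible partial scheme $(c_\kappa,\dots,c_k)$ witnessing the viability of~$s_l$ and~$s_r$, glue them along~$c_\kappa$ into a full nomination scheme $\vec{c}=(c_1,\dots,c_k)$, and verify the two required properties. The score of~$c_\kappa$ in~$\vec{c}$ equals $s_l+s_r-|V^{P_\kappa}|=s^\star$ by the same inclusion-exclusion argument as above. For any other $j \in [k] \setminus \{\kappa\}$, the score of $c_j$ in~$\vec{c}$ is at most~$s^\star$: voters in~$V_{\leq \kappa}$ contributing to $c_j$ do so at a rate bounded by~$s^\star$ (by left-PP-feasibility if $j<\kappa$, or vacuously if $j>\kappa$ since such voters never vote for~$c_j$), and symmetrically for $V_{\geq \kappa}$; since the two voter groups overlap only in $V^{P_\kappa}$ (which does not vote for~$c_j$), no double-counting issue arises. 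Hence $c_\kappa$ is a winner with score exactly~$s^\star$, giving~(a).

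I do not anticipate any real obstacle: the statement is a straightforward ``decomposition along the pivot party~$P_\kappa$'' lemma and the two directions are essentially bookkeeping. The only delicate point is being explicit about how $V^{P_\kappa}$ is counted on both sides, so that the ``$-|V^{P_\kappa}|$'' correction term in~(b) is correctly justified; once this is spelled out, PP-feasibility on each side immediately upgrades to the required score bound on the glued scheme.
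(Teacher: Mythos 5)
Your proposal is correct and follows essentially the same route as the paper's own proof: both directions decompose the vote count of~$c_\kappa$ along the pivot party into contributions from $V_{\leq\kappa}$ and $V_{\geq\kappa}$, use $V_{\leq\kappa}\cap V_{\geq\kappa}=V^{P_\kappa}$ to justify the $-|V^{P_\kappa}|$ correction, and observe that the PP-feasibility conditions on the two halves are exactly the statement that no nominee exceeds~$s^\star$ votes in the glued scheme. Your extra remarks (independence of the prefix's vote counts from the later nominees, and the fact that voters in $V_{\geq\kappa}$ never vote for parties left of~$P_\kappa$) are correct elaborations of points the paper leaves implicit.
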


\begin{proof}
     Assume first that (a) holds, and let $\E$ be the resulting election. Note that $V_{\leq i} \cap V_{\geq i}=V^{P_i}$. 
     Let $s_l$ and $s_r$ be the number of votes that $c_\kappa$ obtains in~$\E$ from voters in~$V_{\leq \kappa}$ and from voters in~$V_{\geq \kappa}$, respectively. The voters counted in both of these sets are exactly those in~$V^{P_\kappa}$, who all vote for~$c_\kappa$ in~$\E$. Hence, the score obtained by~$c_\kappa$ is ${s_l+s_r-|V^{P_\kappa}|=s^\star}$. Moreover, as $c_\kappa$ is a winner, no candidate may obtain a score higher than~$s$, and thus $(c_1,\dots,c_\kappa)$ and $(c_\kappa,\dots,c_k)$ is a left- and a right-PP-feasible partial nomination \scheme\  for~$c_\kappa$, respectively, in which $c_\kappa$ obtains $s_l$ votes from~$V_{\leq \kappa}$ and $s_r$ votes from~$V_{\geq \kappa}$, proving~(b).

     Assume now that (b) holds. Then there exists a left- and a right-PP-feasible partial nomination \scheme\ $(c_1,\dots,c_\kappa)$ and $(c_\kappa,\dots,c_k)$, respectively, for~$c_\kappa$, with $c_\kappa$ obtaining $s_l$ votes from~$V_{\leq \kappa}$ in the former and $s_r$ votes from~$V_{\geq \kappa}$ in the latter. Thus, in the nomination \scheme\ $(c_1,\dots,c_\kappa,\dots,c_k)$, every nominee obtains at most~$s^\star$ votes, and $c_\kappa$ obtains exactly $s_l+s_r-|V^{P_i}|=s^\star$ votes, and hence becomes a winner.
 \end{proof}

\paragraph{Computing viable scores.}
Let us now provide a dynamic programming method for computing all left-PP-viable scores for each candidate in~$P_1 \cup \dots \cup P_\kappa$; computing right-PP-viable scores for all candidates in~$P_\kappa \cup \dots \cup P_k$ can be done in a symmetric manner. Henceforth, we may write PP-viable instead of left-PP-viable.

First, it is clear that for each $c_1 \in P_1$, 
the only PP-viable score is $|V^{P_1}|$, 
because all voters in~$V_{\leq 1}=V^{P_1}$ vote for the nominee of~$P_1$ irrespective of the nomination \scheme.

For candidates belonging to some party~$P_i$ with $1<i\leq \kappa$, we use Lemma~\ref{lem:viable-score-recursion} for computing the PP-viable scores. 
\begin{lemma}
\label{lem:viable-score-recursion}
Let $1 <i \leq \kappa$, and consider some candidate~$c_i \in P_i$. A non-negative integer~$s_i$ is a PP-viable score for~$c_i$ if and only if 
there exist a candidate~$c_{i-1} \in P_{i-1}$ and an integer~$s_{i-1}$ between $0$ and~$s^\star$ such that 

\smallskip
\begin{minipage}{0.96\columnwidth}
\begin{itemize}
\item[(i)] $s_{i-1}$ is PP-viable for~$c_{i-1}$,
\item[(ii)] $c_i$ obtains exactly~$s'_i=s_i-|V^{P_i}|$ votes from $V^{(P_{i-1},P_i)}$ in any nomination \scheme\ where $c_{i-1}$ and $c_i$ are nominated, and
\item[(iii)] $s_{i-1}+|V^{(P_{i-1},P_i)}|-s'_i \leq s^\star$. 
\end{itemize}
\end{minipage}
\end{lemma}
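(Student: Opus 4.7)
\emph{Proof proposal.} My plan is to prove the equivalence by exploiting two structural facts that already follow from earlier results: voters in $V^{(P_{i-1},P_i)}$ split only between the nominees of $P_{i-1}$ and $P_i$ (by Lemma~\ref{lem:two-possible-votes}), and voters in $V_{\leq i-2}$ are entirely unaffected by who is nominated from $P_i$ onwards. Together these allow the score contributions from $V_{\leq i-1}$ and from $V^{(P_{i-1},P_i)}$ to be treated independently, which is exactly the decomposition built into the three recursion conditions.

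For the forward direction, suppose $s_i$ is PP-viable for $c_i$, witnessed by a left-PP-feasible partial scheme $\vec{c}=(c_1,\dots,c_{i-1},c_i)$. First I would truncate to $\vec{c}_-=(c_1,\dots,c_{i-1})$ and argue it is left-PP-feasible for $i-1$: for every $j\leq i-2$, the score of $c_j$ in any extension of $\vec{c}_-$ equals its score in the corresponding extension of $\vec{c}$, because no voter supporting $c_j$ lies in $V^{(P_{i-1},P_i)}$ or later. Taking $s_{i-1}$ to be the votes $c_{i-1}$ receives from $V_{\leq i-1}$ in $\vec{c}_-$ then yields~(i). Condition~(ii) drops out by writing $s_i=|V^{P_i}|+|\{v\in V^{(P_{i-1},P_i)}: c_i\succ_v c_{i-1}\}|$, so that $s'_i:=s_i-|V^{P_i}|$ is precisely the claimed quantity. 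Finally, the total score of $c_{i-1}$ in $\vec{c}$ from $V_{\leq i}$ equals $s_{i-1}+(|V^{(P_{i-1},P_i)}|-s'_i)$, which left-PP-feasibility of $\vec{c}$ forces to be at most $s^\star$, giving~(iii).

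For the reverse direction, I would take a left-PP-feasible scheme $(c_1,\dots,c_{i-1})$ witnessing (i) and extend it with the given $c_i$. To verify that $(c_1,\dots,c_{i-1},c_i)$ is left-PP-feasible for $i$, the bound on the score of each party $P_j$ with $j\leq i-2$ carries over verbatim from feasibility for $i-1$, since the nomination of $c_i$ cannot alter those scores; the bound on the score of $P_{i-1}$ is precisely inequality~(iii); and $c_i$ attains exactly $|V^{P_i}|+s'_i=s_i$ votes from $V_{\leq i}$ thanks to~(ii).

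The main obstacle I foresee is purely bookkeeping, namely keeping track of which subset of $V$ contributes to whose score at each level of the recursion: $V^{(P_{i-1},P_i)}$ is part of $V_{\leq i}$ but not of $V_{\leq i-1}$, so a careless argument could double-count these voters when matching the score $s_{i-1}$ at level $i-1$ to the score of $c_{i-1}$ at level $i$. Lemma~\ref{lem:two-possible-votes} resolves this unambiguously, so once the partition of $V$ is written out carefully no further complication should arise.
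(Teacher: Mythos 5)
Your proposal is correct and follows essentially the same route as the paper's own proof: both directions rest on the same decomposition of the votes from $V_{\leq i}$ into the contributions of $V_{\leq i-1}$, $V^{(P_{i-1},P_i)}$, and $V^{P_i}$, with truncation/extension of the partial nomination scheme as the witness. Your explicit remark that the scores of parties $P_j$ with $j\leq i-2$ are unaffected by the nominee of $P_i$ is a detail the paper leaves implicit, but it is the same argument.
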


\begin{proof}
    Given a left-PP-feasible partial nomination \scheme\ $(c_1,\dots,c_{i-1},c_i)$ for~$i$, observe that the votes obtained by~$c_i$ from voters in~$V_{\leq i}$ can be partitioned into those coming from voters in $V^{P_i}$ and those coming from voters in~$V^{(P_{i-1},P_i)}$. Hence, if $c_i$ obtains altogether $s_i$ votes from~$V_{\leq i}$, then it must obtain exactly $s'_i=s_i -|V^{P_i}|$ votes from voters in~$V^{(P_{i-1},P_i)}$, proving~(ii). Similarly, 
    the votes obtained by~$c_{i-1}$ from voters in~$V_{\leq i}$ can be partitioned into those coming from voters in~$V_{\leq i-1}$ and those coming from voters in~$V^{(P_{i-1},P_i)}$. 
    Let us denote the number of votes in the former category as $s_{i-1}$; the number of votes in the latter category is exactly $|V^{(P_{i-1},P_i)}|-s'_i$.
    Since $c_{i-1}$ cannot win more than~$s^\star$ votes, we obtain that (iii) holds. Finally, it is also clear from the definition of~$s_{i-1}$ and from the left-feasibility of $(c_1,\dots,c_{i-1},c_i)$ that $s_i$ is a PP-viable score for~$c_{i-1}$, with the partial nomination \scheme\ $(c_1,\dots,c_{i-1})$ as its witness.
    
    For the other direction, let $(c_1,\dots,c_{i-1})$ be a left-PP-feasible partial nomination \scheme\
    for~$i-1$ that witnesses (i), i.e., the PP-viability of~$s_{i-1}$ for $c_{i-1}$. Then $(c_1,\dots,c_{i-1},c_i)$ is a left-PP-feasible partial nomination \scheme\ for~$i$, because due to~(ii), $c_{i-1}$ obtains $|V^{(P_{i-1},P_i)}|-s'_i$ votes from voters in~$V^{(P_{i-1},P_i)}$, and thus obtains at most~$s^\star$ votes in total from voters in~$V_{\leq i}$ due to~(iii). Finally, (ii) also implies that $c_i$ obtains exactly~$s'_i+|V^{P_i}|=s_i$ votes from~$V_{\leq i}$ in every nomination \scheme\ containing candidates $c_1,\dots,c_{i-1},c_i$, which shows that $s_i$ is a PP-viable score for~$c_i$.
\end{proof}

Lemma~\ref{lem:viable-score-recursion} yields a way to compute all PP-viable scores between $0$ and~$s^\star$ for each candidate in~$P_1 \cup \dots \cup P_\kappa$ using dynamic programming: we need to compute and store a binary variable for each $s \in \{0,\dots,s^\star\}$ and each $c_i \in P_i$, $i \in [\kappa]$, describing whether $s$ is PP-viable for~$c_i$ or not; we already know these values for $i=1$, and we can compute them for increasing values of~$i$ using Lemma~\ref{lem:viable-score-recursion}. 
Having computed all left-PP-viable scores and all right-PP-viable scores (in a symmetric manner) for each candidate of~$P_\kappa$, we can use Lemma~\ref{lem:viable-score-vs-possible-winner} to find a nominee for~$P_\kappa$ that can win in some reduced election, or conclude correctly that no such candidate exists.

It is straightforward to see that the running time of the presented algorithm for the \pospres\ problem is polynomial.

\subsection{Solving the \necpres\ problem}
    By slightly modifying our algorithm for \textsc{Possible President}, 
    we can also decide for each pair~$(P^+,P^-)$ of parties whether there is a nomination \scheme\ that results in $P^+$ becoming a winner but $P^-$ not: during the computation of PP-viable scores for each candidate, we simply need to exclude~$s^\star$, the winning score, from the set of PP-viable scores for all candidates of~$P^-$. Therefore, to \necpres\ in polynomial time under party-aligned single-peaked preferences, we simply need to decide for each party~$P' \in \P \setminus \{P\}$ whether $P'$ can become a winner in some reduced election where $P$ is not a winner. If so, then our instance for \textsc{Necessary Winner} is a no-instance; otherwise, if no party~$P'$ can become a winner while also preventing $P$ from winning, then we have a yes-instance. 



\end{appendices}
\fi 

\end{document}